 \newtheorem{thm}{Theorem}[section]
 \newtheorem{cor}[thm]{Corollary}
 \newtheorem{lem}[thm]{Lemma}
 \theoremstyle{definition}
 \newtheorem{defn}[thm]{Definition}
  \newtheorem{op}[thm]{Open Problem}
 \theoremstyle{remark}
 \numberwithin{equation}{section}
\begin{document}

%  Leave these commented lines here
% \input{elaheader-volx-xx.tex}
% \setcounter{page}{1}

% \renewcommand{\thefootnote}{\fnsymbol{footnote}}
% \renewcommand{\thefootnote}{\arabic{footnote}}
% \renewcommand{\theequation}{\thesection.\arabic{equation}}

\title[On the presence of families of pseudo-bosons in nilpotent Lie...]{On the presence of families of pseudo-bosons in nilpotent Lie algebras of arbitrary corank }

\author[F. Bagarello]{Fabio Bagarello}
\address{Scuola Politecnica dell'Universit\'a \endgraf
Dipartimento di Energia, Ingegneria dell'Informazione e modelli Matematici \endgraf
Universit\'a degli Studi di Palermo\endgraf
Viale delle Scienze, I-90128\endgraf
Palermo, Italy\endgraf}
%and\endgraf
%Istituto Nazionale di Fisica Nucleare \endgraf
%via Cinthia, Montesantangelo\endgraf
%Napoli, Italy \endgraf }
\email{fabio.bagarello@unipa.it}

\author[F.G. Russo]{Francesco G. Russo}
\address{Department of Mathematics and Applied Mathematics\endgraf
 University of Cape Town \endgraf
 Private Bag X1, Rondebosch 7701\endgraf 
 Cape Town, South Africa\endgraf}
\email{francescog.russo@yahoo.com}

\subjclass{17Bxx; 37J15; 70G65; 81R30}

\date{\today}

%\dedicatory{}

\begin{abstract}
\noindent We have recently  shown that pseudo-bosonic operators realize concrete examples of finite dimensional nilpotent Lie algebras over the complex field. It has been the first time that such operators were analyzed in terms of nilpotent Lie algebras (under prescribed conditions of physical character). On the other hand, the general classification of a finite dimensional nilpotent Lie algebra $\mathfrak{l}$ may be given via the size of its Schur multiplier involving  the so-called corank $t(\mathfrak{l})$ of $\mathfrak{l}$. We represent $\mathfrak{l}$ by pseudo-bosonic ladder operators for $t(\mathfrak{l}) \le 6$ and this allows us to represent  $\mathfrak{l}$ when its dimension is $\le 5$. 
\end{abstract}

\subjclass[2010]{Primary: 47L60, 17B30; Secondary: 17B60, 46K10}
\keywords{Pseudo-bosonic operators, Hilbert space, Schur multiplier, nilpotent Lie algebras, homology }
\date{\today}

\maketitle

\section{Introduction}

%{\textcolor{blue}{Ho fatto qualcosa. Ma certamente ci sar\'a altro da fare, a valle.}}

The notion of Heisenberg algebra and of Heisenberg group appear since long time in the description of  quantum dynamical systems and sophisticated techniques of algebra, topology and functional analysis are often used, in order to understand the behaviour of elementary particles or, more in general, of some quantum systems. When it is possible to detect algebraic structures like nilpotent Lie algebras, we are in a very good position to recognize symmetries and conservation laws of significant interest in physics. This is exactly the content of the well known Noether's Theorem, \cite{gold}. The reference of Snobl and others  \cite{snobl} show some recent techniques in the context of Lie algebras and theoretical physics, but here we will focus on a specific classification, due to the so called ``Berkovich--Zhou invariant'' (or ``corank'') of nilpotent finite dimensional Lie algebras (over a field of characteristic $0$).

We will refer to the recent works \cite{nr1, nr2, nr3}, which generalize \cite{bat, es, es2}, but, at the same time, the present contribution continues the efforts we made in \cite{bagrus2018}, where we looked at dynamical systems involving pseudo-bosons in terms of finite dimensional nilpotent Lie algebras for values of the dimension smaller than five.

 During our discussions, we will list a first classification 
 of finite dimensional nilpotent Lie algebras by the size of
 their dimension, and, specifically, when the dimension is at
 most $5$ (see \cite{degraaf, morozov, turk} and Theorem
 \ref{classification} below). Then we will use  the notion of
 \textit{Schur multiplier} of a  Lie algebra, referring to \cite{weibel} for its classical approach via homological algebra.  Schur \cite{schur} introduced this concept in 1907 in group theory, but his approach was subject to generalizations of various nature for different algebraic and geometric structures.

 The Schur multiplier does not seem to be related with the classification of finite dimensional nilpotent Lie algebras by dimension, but this is only apparently true. Indeed we
 may define a non-negative integer (which turns out to be a
 numerical invariant), called ``corank''  and due to
 Berkovich and Zhou \cite{berkovich, zhou}. This number
 measures the size of the Schur multiplier of a nilpotent Lie
 algebra and we will see that the corank allows us to get a
 second classification of nilpotent Lie algebras, but this
 time the classification gives information on the way in
 which our Lie algebra may be written as the sum of two
 smaller abelian Lie algebras.

 This second classification is more useful if we
 want to focus on the structure of a general Lie algebra.
 Moreover,  this new classification presents again nilpotent
 Lie algebras of dimension $\le 5$ when the value of the
 corank is smaller than $6$ (see Theorem \ref{tl} below
 for more details).

 While the parallel approach of classification via corank and
 via dimension was known from a pure algebraic--geometric perspective (see
 \cite{bat, es, es2, mon, nr1, nr2}), one of the
 interesting result of the present paper is that explicit
 physical examples of dynamical systems can be related to
 this classification. This point is not known in literature,
 to the best of our knowledge.
 
  In fact we will show that all finite dimensional nilpotent
 Lie algebras, over the field of complex numbers may be
 realized  in terms of (pseudo-) bosonic operators  when
 their corank is at most $6$ (see Theorem \ref{main2}
 below). Moreover we will provide examples of significant
 interest in physics when the value of the corank is
 arbitrary big (see Theorem \ref{main1}).  
 
% Our two main results show that some pure abstract
 %classifications (of nilpotent Lie algebras) may be
 %interpreted in a meaningful way in  theoretical physics.
 %Viceversa, the  size of the Schur multiplier allows us to
 %understand when a nilpotent Lie algebra may be described in
 %what we will call the ``semidirect sum'' of two
 %smaller Lie algebras (see Definition \ref{semidirect})
 %and this has an important impact in the theory of bosons and
 %pseudo-bosons, because it allows us to detect symmetries in
 %dynamical systems in which bosons and pseudo-bosons are
 %involved. [Questa parte la leverei tutta. sembra una
 %ripetizione di quanto detto nel capoverso precedente]

Section 2 recalls some well known results and definitions of algebraic nature. They can be found in \cite{beltita,snobl,weibel} and are integrated by some recent material in \cite{nr1, nr2, nr3}. 
 We refer to \cite[Section 3]{bagrus2018} for the terminology and notations which will be involved in the proofs of our main results, where a framework of functional analysis and of theory of the operators is applied. Finally, Section 3 illustrates  the main results with corresponding proofs.

\section{Some results of low dimensional cohomology}

In order to focus on the main topics, without dispersion of energies in additional details, we do not report the classical notion of \textit{Chevalley-Eilenberg complex} for a Lie algebra, since it can be found in \cite[Definition 7.7.1]{weibel}. Analogously, we will assume that the reader is familiar with the second cohomology Lie algebra with complex coefficients. This Lie algebra is widely described in \cite[Corollary 7.7.3 and Exercise 7.7.5]{weibel} and  we will call it  \textit{ Schur multiplier} of the Lie algebra $\mathfrak{l}$, using the notation $M(\mathfrak{l})$. In fact we will work only with finite dimensional  Lie algebras over the field of complex numbers.

The size of the Schur multiplier describes how many extensions we can actually form, beginning from an abelian Lie algebra with another one acting on it. Furthemore, we will see  that it is possible to introduce a numerical invariant, which we will call  ``corank'', or  ``Berkovich--Zhou invariant'', and this measures the size of the Schur multiplier from some ideal cases for which we known essentially all. The main results of this paper will indeed show that it is possible to find concrete examples for any possible value of this corank and, to the best of our knowledge, it is the first time that these examples can be found in physical contexts.

 We begin with the following notion, available in \cite{weibel}.

\begin{defn}\label{extensions}
A Lie algebra $\mathfrak{l}$ is an \textit{extension} of an abelian Lie algebra $\mathfrak{a}$ (called \textit{abelian kernel}) by another  Lie algebra $\mathfrak{b}$, if there is a short exact sequence
$$0 \ \longrightarrow \  \mathfrak{a} {\overset{\mu}{\longrightarrow}} \ \mathfrak{l} \ {\overset{\varepsilon}{\longrightarrow}} \mathfrak{b} \ \longrightarrow \ 0$$
such that  $\mathfrak{a}$ is an abelian ideal of $\mathfrak{l}$ and $\mathfrak{l}/\mathfrak{a} \simeq \mathfrak{b}$ is the Lie algebra quotient. Here $\mu$ is a monomorphism and $\varepsilon$ an epimorphism of Lie algebras such that $\ker \varepsilon = \mathrm{Im} \ \mu $. If in addition, $\mathfrak{a}$ is contained in the center \[Z(\mathfrak{l})=\{ a \in \mathfrak{l}  \ | \ [a,b]=0  \ \forall b \in \mathfrak{l}\} \]
 of $\mathfrak{l}$, the Lie algebra $\mathfrak{l}$ is called \textit{central extension} of $\mathfrak{a}$ by $\mathfrak{b}$.
\end{defn}

Definition \ref{extensions} is reported from \cite[Definition 7.6.1]{weibel}, where it is noted that $\mathfrak{a}$ may be endowed with the structure of $\mathfrak{b}$-module  defining $ba$ to be the product $[\widetilde{b},a]$ in $\mathfrak{l}$, where $\varepsilon(\widetilde{b})=b$ for all $a \in \mathfrak{a}$ and $b \in \mathfrak{b}$. Following \cite[Extension Problem 7.6.2]{weibel}, two extensions  $0 \ \longrightarrow \  \mathfrak{a} {\overset{\mu_1}{\longrightarrow}} \ \mathfrak{l}_1 \ {\overset{\varepsilon_1}{\longrightarrow}} \mathfrak{b} \ \longrightarrow \ 0$ and $0 \ \longrightarrow \  \mathfrak{a} {\overset{\mu_2}{\longrightarrow}} \ \mathfrak{l}_2 \ {\overset{\varepsilon_2}{\longrightarrow}} \mathfrak{b} \ \longrightarrow \ 0$ are equivalent if there is an isomorphism $\varphi$ such that the following diagram 
$$\begin{CD}
0  @>>>   \mathfrak{a} @>\mu_1>> \mathfrak{l}_1 @>\varepsilon_1>>  \mathfrak{b} @>>> 0\\ 
 @. @| @VV\varphi V @| @. \\ 
0  @>>>   \mathfrak{a} @>\mu_2>> \mathfrak{l}_2 @>\varepsilon_2>>  \mathfrak{b} @>>> 0
\end{CD}
$$
is commutative. Then we can define the set $\mathrm{Ext}(\mathfrak{b},\mathfrak{a})$ of equivalence classes of the extensions of  $\mathfrak{a} $ by  $\mathfrak{b}$ realizing $\mathfrak{l}$. The problem of the description of $\mathrm{Ext}(\mathfrak{b},\mathfrak{a})$ is known as Extension Problem and a  classical result \cite[Classification Theorem 7.6.3]{ weibel} shows an important relation with the Schur multiplier.

\begin{lem}[See \cite{weibel}, Chapter 7]\label{h2}Given an extension $0 \ \rightarrow \  \mathfrak{a} \rightarrow \ \mathfrak{l} \rightarrow \mathfrak{b} \ \rightarrow \ 0$ of a Lie algebra $\mathfrak{l}$ as in Definition \ref{extensions}:
\begin{itemize}
\item[(i)]There is a 1-1 correspondence between $M(\mathfrak{l})$ and $\mathrm{Ext}(\mathfrak{b},\mathfrak{a})$;
\item[(ii)] If $M(\mathfrak{l})=0$, then $\mathfrak{l}$ cannot be realized as  extension with abelian kernel.
%\item[(iii)] If $\mathfrak{l}$ is abelian and $\nabla(\mathfrak{l})=\langle x \otimes x \ |  \ x \in \mathfrak{l} \rangle$ is the diagonal ideal of the Lie algebra tensor product $\mathfrak{l} \otimes \mathfrak{l}$, then $M(\mathfrak{l}) \simeq \mathfrak{l} \wedge \mathfrak{l} = (\mathfrak{l} \otimes \mathfrak{l})/\nabla(\mathfrak{l}).$
\end{itemize}
\end{lem}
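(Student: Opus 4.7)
The strategy is to deduce the lemma from Weibel's Classification Theorem 7.6.3, which identifies $\mathrm{Ext}(\mathfrak{b},\mathfrak{a})$ with the second Chevalley--Eilenberg cohomology $H^2(\mathfrak{b},\mathfrak{a})$, where $\mathfrak{a}$ carries the $\mathfrak{b}$-module structure recorded after Definition \ref{extensions}. In the conventions of the paper this second cohomology is what is being called the Schur multiplier $M(\mathfrak{l})$, so both parts reduce to this identification together with a small amount of bookkeeping.

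For part (i), the plan is to set up the factor-system correspondence explicitly. Given an extension $0\to\mathfrak{a}\overset{\mu}{\to}\mathfrak{l}\overset{\varepsilon}{\to}\mathfrak{b}\to 0$, pick a linear section $s:\mathfrak{b}\to\mathfrak{l}$ of $\varepsilon$ and define $c(x,y)=s([x,y])-[s(x),s(y)]$, which takes values in $\mu(\mathfrak{a})$ because $\varepsilon$ is a Lie homomorphism. One checks that $c$ is a 2-cocycle in the Chevalley--Eilenberg complex of $\mathfrak{b}$ with coefficients in $\mathfrak{a}$ and that changing $s$ modifies $c$ by a 2-coboundary. The commutativity of the diagram in Definition \ref{extensions} translates precisely into the statement that equivalent extensions yield cohomologous cocycles, producing a well-defined injection $\mathrm{Ext}(\mathfrak{b},\mathfrak{a})\hookrightarrow H^2(\mathfrak{b},\mathfrak{a})$. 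Surjectivity follows from the standard crossed-product recipe that builds a Lie bracket on $\mathfrak{a}\oplus\mathfrak{b}$ from any 2-cocycle, giving back an extension in the prescribed class. Identifying $H^2(\mathfrak{b},\mathfrak{a})$ with $M(\mathfrak{l})$ then closes (i).

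Part (ii) is the contrapositive reading of (i): if $M(\mathfrak{l})=0$ and $\mathfrak{l}$ fitted into an abelian-kernel extension in the sense of Definition \ref{extensions} with nontrivial $\mathfrak{a}$, the bijection of (i) would attach to it a class in $\mathrm{Ext}(\mathfrak{b},\mathfrak{a})$ corresponding to a nonzero element of $M(\mathfrak{l})$, contradicting the hypothesis. Hence no such realization of $\mathfrak{l}$ can exist.

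The main obstacle I anticipate is making the identification $H^2(\mathfrak{b},\mathfrak{a})\cong M(\mathfrak{l})$ genuinely precise, since on the face of it Weibel's cohomology depends on the pair $(\mathfrak{b},\mathfrak{a})$ while the Schur multiplier is an invariant of the middle algebra alone. A clean route is to take a free presentation $0\to\mathfrak{r}\to\mathfrak{f}\to\mathfrak{l}\to 0$ and invoke the Hopf-type formula $M(\mathfrak{l})=(\mathfrak{r}\cap[\mathfrak{f},\mathfrak{f}])/[\mathfrak{f},\mathfrak{r}]$, or to specialize to the central case so that $\mathfrak{a}$ becomes a trivial $\mathfrak{b}$-module and the coefficients collapse to $\mathbb{C}$; in either case the identification requires a short verification that is the real content lying beyond the formal invocation of Weibel's 7.6.3.
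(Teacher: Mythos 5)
The paper offers no proof of this lemma: it is quoted verbatim from Weibel, Chapter~7, so there is no in-paper argument to compare yours against, and you are in effect being asked to reconstruct the cited result. Your factor-system correspondence $\mathrm{Ext}(\mathfrak{b},\mathfrak{a})\leftrightarrow H^2(\mathfrak{b},\mathfrak{a})$ is the standard and correct route to Weibel's Classification Theorem~7.6.3. The genuine gap is the step you yourself flag and then defer: the identification $H^2(\mathfrak{b},\mathfrak{a})\cong M(\mathfrak{l})$. This is not a ``short verification'' that can be supplied later; it is false in general. Take $\mathfrak{l}$ abelian of dimension $2$, $\mathfrak{a}$ a one-dimensional ideal and $\mathfrak{b}=\mathfrak{l}/\mathfrak{a}$: then $M(\mathfrak{l})=H^2(\mathfrak{l},\mathbb{C})$ is one-dimensional (consistently with Theorem~\ref{ab}(iii)), while $\Lambda^2\mathfrak{b}=0$ forces $H^2(\mathfrak{b},\mathfrak{a})=0$, so $\mathrm{Ext}(\mathfrak{b},\mathfrak{a})$ is a single point and no bijection with $M(\mathfrak{l})$ exists. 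Weibel's theorem classifies extensions by the cohomology of the quotient $\mathfrak{b}$ with coefficients in the kernel $\mathfrak{a}$, whereas the Schur multiplier of Section~2 is an invariant of the middle term $\mathfrak{l}$ alone; the two are related only in special situations (e.g.\ via a Hopf-type formula for a free presentation, or for stem central extensions), and an honest proof must state in which such situation the lemma is being asserted. As written, your argument proves Weibel's~7.6.3, not the statement of the lemma.

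There is a second, independent gap in your deduction of (ii). You argue that a realization of $\mathfrak{l}$ as an abelian-kernel extension ``would attach to it a class corresponding to a nonzero element of $M(\mathfrak{l})$''. Nothing forces that class to be nonzero: the zero cohomology class corresponds to the split extension, i.e.\ the semidirect sum of Definition~\ref{semidirect}, which still realizes $\mathfrak{l}$ as an extension with abelian kernel. So even granting (i), vanishing of the classifying object only says that every such extension is equivalent to the split one, not that none exists. To obtain (ii) you must either read ``realized as extension'' as ``realized as a non-split extension'' or give a separate argument; the contrapositive of (i) alone does not yield it.
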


As indicated in Lemma \ref{h2} (ii),  the size of the Schur multiplier actually measures how many extensions are possible. Further important examples of extensions are constructed below.

\begin{defn}\label{semidirect}
A Lie algebra $\mathfrak{l}$ is the {\em semidirect sum} of two of its Lie subalgebras $\mathfrak{a}$ and $\mathfrak{b}$ if  the following conditions are satisfied:
\begin{itemize}
\item[(i)]$\mathfrak{a}$  is an ideal of $\mathfrak{l}$,
\item[(ii)] $\mathfrak{l}= \mathfrak{a} + \mathfrak{b}$,
\item[(iii)] $\mathfrak{a} \cap \mathfrak{b} = 0$.
\end{itemize}
Of course, if $\mathfrak{a} \subseteq Z(\mathfrak{l})$, then $\mathfrak{l}$ is a central extension of $\mathfrak{a}$ by $\mathfrak{b}$.
\end{defn}

Having in mind Lemma \ref{h2} (i),  Definition \ref{semidirect} and Definition \ref{extensions}, it is clear that semidirect sums are special cases of extensions with abelian kernel, and so the Schur multiplier gives information of structure of a Lie algebra. In this perspective, we recall another classical result. A semisimple Lie algebra is a sum of simple Lie algebras. A well known theorem of Levi illustrates the importance of the semidirect sum in the structure of the finite dimensional Lie algebras.

\begin{thm}[See \cite{weibel}, Chapter 7, Levi Decomposition]
Any finite dimensional Lie algebra $\mathfrak{l}$ over $\mathbb{C}$  is semidirect sum of a solvable Lie algebra $\mathfrak{a}$ by a semisimple Lie algebra $\mathfrak{b}$.
\end{thm}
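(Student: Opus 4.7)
My plan is to argue via the radical and then invoke the vanishing of a certain second cohomology, which fits naturally with the $H^2$/Schur-multiplier language used in the preceding sections.

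First I would define the \emph{radical} $\mathfrak{r}$ of $\mathfrak{l}$ as the (unique) maximal solvable ideal; existence is guaranteed by the fact that the sum of two solvable ideals is again solvable, so in finite dimension one may take the sum of all solvable ideals. Setting $\mathfrak{b} := \mathfrak{l}/\mathfrak{r}$, one checks that $\mathfrak{b}$ has no nonzero solvable ideal (any such ideal would pull back to a solvable ideal of $\mathfrak{l}$ strictly containing $\mathfrak{r}$, contradicting maximality), hence $\mathfrak{b}$ is semisimple. This produces a short exact sequence
$$0 \ \longrightarrow \ \mathfrak{r} \ \longrightarrow \ \mathfrak{l} \ \longrightarrow \ \mathfrak{b} \ \longrightarrow \ 0$$
with solvable kernel and semisimple quotient, and the theorem amounts to showing that this sequence splits as an extension of Lie algebras, producing the desired semidirect sum $\mathfrak{l} = \mathfrak{r} \rtimes \mathfrak{b}$ in the sense of Definition \ref{semidirect}.

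Next, I would reduce the splitting problem to an abelian-kernel situation, so that Lemma \ref{h2} and its cohomological content can be brought in. Proceeding by induction on $\dim \mathfrak{r}$, if $\mathfrak{r} = 0$ there is nothing to show. Otherwise let $\mathfrak{a}$ be the last nonzero term of the derived series of $\mathfrak{r}$; then $\mathfrak{a}$ is a nonzero abelian ideal of $\mathfrak{l}$. The inductive hypothesis applied to $\mathfrak{l}/\mathfrak{a}$ (whose radical is $\mathfrak{r}/\mathfrak{a}$, of strictly smaller dimension) gives a subalgebra of $\mathfrak{l}/\mathfrak{a}$ isomorphic to $\mathfrak{b}$, complementing $\mathfrak{r}/\mathfrak{a}$. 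Lifting this subalgebra back to $\mathfrak{l}$ yields a subalgebra $\mathfrak{l}_0 \le \mathfrak{l}$ fitting into an extension
$$0 \ \longrightarrow \ \mathfrak{a} \ \longrightarrow \ \mathfrak{l}_0 \ \longrightarrow \ \mathfrak{b} \ \longrightarrow \ 0,$$
with $\mathfrak{a}$ abelian and $\mathfrak{b}$ semisimple. Splitting the original sequence has thus been reduced to splitting this one, i.e.\ to realising its class in $\mathrm{Ext}(\mathfrak{b},\mathfrak{a})$ as the trivial class.

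At this point I would invoke \emph{Whitehead's second lemma}: for any finite-dimensional representation of a semisimple complex Lie algebra $\mathfrak{b}$ on an abelian Lie algebra $\mathfrak{a}$ one has $H^{2}(\mathfrak{b},\mathfrak{a}) = 0$. Combined with Lemma \ref{h2}(i), this forces the set $\mathrm{Ext}(\mathfrak{b},\mathfrak{a})$ to consist only of the split extension, so $\mathfrak{l}_0 \cong \mathfrak{a} \rtimes \mathfrak{b}$ and, unwinding the induction, $\mathfrak{l} \cong \mathfrak{r} \rtimes \mathfrak{b}$ as required. The main obstacle is precisely this cohomological vanishing: it rests on the Casimir operator of $\mathfrak{b}$ and on Weyl's complete reducibility theorem, which is the non-trivial algebraic input hidden inside the statement and which is proved independently in \cite[Chapter 7]{weibel}. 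Granting that ingredient, the rest of the argument is a routine short-exact-sequence chase within the framework already set up for Definitions \ref{extensions}--\ref{semidirect}.
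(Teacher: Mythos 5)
Your proof is correct and follows essentially the same route as the paper, which offers no argument of its own but quotes the result from Weibel, Chapter 7, where the proof is exactly yours: pass to the radical, reduce by induction (via the last term of the derived series of $\mathfrak{r}$) to an extension with abelian kernel and semisimple quotient, and split it using Whitehead's second lemma, whose proof via the Casimir element is the real content. The one cosmetic correction is that the splitting step should appeal to the classification of abelian-kernel extensions of $\mathfrak{b}$ by $\mathfrak{a}$ via $H^{2}(\mathfrak{b},\mathfrak{a})$ (with its $\mathfrak{b}$-module structure), rather than to Lemma \ref{h2}(i), which as stated in the paper concerns $M(\mathfrak{l})$.
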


 Note that a solvable Lie algebra is a Lie algebra which
 contains a finite series of ideals in which each factor is
 an abelian Lie algebra. While the role of the Schur
 multiplier of solvable Lie algebras is quite difficult to
 compute, the  result above is classical and may help for the
 case of semisimple Lie algebras. An important consequence is
 the following.

\begin{cor}[See \cite{weibel}, Chapter 7]\label{semisimple} A semisimple Lie algebra $\mathfrak{l}$ over $\mathbb{C}$ of finite dimension has $M(\mathfrak{l})=0$.
\end{cor}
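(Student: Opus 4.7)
The plan is to use Lemma \ref{h2}(i) to translate $M(\mathfrak{l})=0$ into the statement that every extension of $\mathfrak{l}$ with abelian kernel splits, and then to produce such a splitting directly from the Levi Decomposition stated above. Concretely, I would start from an arbitrary short exact sequence $0 \to \mathfrak{a} \to \widetilde{\mathfrak{l}} \to \mathfrak{l} \to 0$ with $\mathfrak{a}$ abelian (as in Definition \ref{extensions}), and try to exhibit a Lie-algebra section $\sigma\colon\mathfrak{l}\to\widetilde{\mathfrak{l}}$ of the epimorphism $\varepsilon$. Such a $\sigma$ forces the extension to be equivalent (in the sense of the commutative diagram displayed before Lemma \ref{h2}) to the trivial direct-sum extension $0\to\mathfrak{a}\to\mathfrak{a}\oplus\mathfrak{l}\to\mathfrak{l}\to 0$, so the class it represents under the bijection of Lemma \ref{h2}(i) is zero; since this works for every extension, one concludes $M(\mathfrak{l})=0$.

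The key algebraic step is to identify the solvable radical of $\widetilde{\mathfrak{l}}$ with $\mathfrak{a}$. On the one hand, $\mathfrak{a}$ is an abelian, hence solvable, ideal of $\widetilde{\mathfrak{l}}$, so it is contained in the radical $\mathfrak{r}$. On the other hand, $\mathfrak{r}/\mathfrak{a}$ is a solvable ideal of $\widetilde{\mathfrak{l}}/\mathfrak{a}\simeq\mathfrak{l}$, which is semisimple and therefore has zero radical; hence $\mathfrak{r}/\mathfrak{a}=0$, i.e.\ $\mathfrak{r}=\mathfrak{a}$.

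Now I would invoke the Levi Decomposition on the finite dimensional Lie algebra $\widetilde{\mathfrak{l}}$ to produce a semisimple Lie subalgebra $\mathfrak{s}\subseteq\widetilde{\mathfrak{l}}$ with $\widetilde{\mathfrak{l}}=\mathfrak{r}+\mathfrak{s}=\mathfrak{a}+\mathfrak{s}$ and $\mathfrak{a}\cap\mathfrak{s}=0$, in the sense of Definition \ref{semidirect}. The restriction $\varepsilon|_{\mathfrak{s}}\colon\mathfrak{s}\to\mathfrak{l}$ is then a Lie-algebra homomorphism whose kernel is $\mathfrak{s}\cap\ker\varepsilon=\mathfrak{s}\cap\mathfrak{a}=0$, and which is surjective because any preimage of $b\in\mathfrak{l}$ can be written as $r+s$ with $r\in\mathfrak{a}$ and $s\in\mathfrak{s}$, giving $\varepsilon(s)=b$. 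The inverse $\sigma=(\varepsilon|_{\mathfrak{s}})^{-1}$ is then the desired section, and the extension splits.

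The main obstacle is really only the bookkeeping that shows $\mathfrak{r}=\mathfrak{a}$ and that $\mathfrak{a}$ intersects the Levi factor trivially; once Levi is applied, the construction of the section, and hence the conclusion via Lemma \ref{h2}(i), is automatic. A subsidiary point worth checking is that working with an abelian (not necessarily central) kernel $\mathfrak{a}$ is still within the scope of Lemma \ref{h2}, so that vanishing of equivalence classes of all such extensions indeed translates to $M(\mathfrak{l})=0$; this is immediate since semisimple $\mathfrak{l}$ has $Z(\mathfrak{l})=0$, so the central and abelian-kernel cases for $\widetilde{\mathfrak{l}}$ over $\mathfrak{l}$ coincide after passing through the identification $\mathfrak{r}=\mathfrak{a}$.
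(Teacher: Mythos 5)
Your argument is correct, but note that the paper itself offers no proof of Corollary \ref{semisimple}: it is simply quoted from \cite{weibel}, Chapter 7, where the statement is an instance of Whitehead's second lemma ($H^2(\mathfrak{g},M)=0$ for every finite dimensional module $M$ over a semisimple $\mathfrak{g}$ in characteristic $0$), proved there by means of the Casimir operator, and where the Levi decomposition is then \emph{deduced from} that vanishing. You run the implication in the opposite direction: taking the Levi decomposition as given (which is how the paper presents it, immediately before calling the corollary ``an important consequence''), you identify the radical of the middle term of an abelian-kernel extension with the kernel, extract a Levi factor $\mathfrak{s}$, and observe that $\varepsilon|_{\mathfrak{s}}$ is an isomorphism whose inverse is a Lie-algebra section, so every such extension splits and the corresponding class in $M(\mathfrak{l})$ vanishes. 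This is a legitimate, self-contained derivation from the theorem actually stated in the paper; the only caveat is that, if one unwinds the citation, the logical order in \cite{weibel} is the reverse of yours, so this route should not be offered as an independent proof of Whitehead's lemma. Two smaller points of hygiene: a split extension with a non-central abelian kernel is equivalent to the semidirect sum, not in general to the direct sum (this is harmless here, since $M(\mathfrak{l})=H^2(\mathfrak{l},\mathbb{C})$ classifies central extensions with one-dimensional trivial kernel, for which split and direct sum coincide); and your closing appeal to $Z(\mathfrak{l})=0$ is not the right justification for passing from abelian-kernel extensions to $M(\mathfrak{l})$ --- all you need is that central extensions by $\mathbb{C}$ are a special case of the abelian-kernel extensions you have already shown to split.
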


%But it is more important the following result of structure.

%\begin{thm}[Schur--Zassenhaus Theorem for Lie Algebras]\label{h2l}
%Assume $\mathfrak{l}$ is a finite dimensional Lie algebra over $\mathbb{C}$. If $H^2(\mathfrak{l}, \mathbb{C})=0$, then there are two Lie algebras $\mathfrak{a}$ and $\mathfrak{b}$ such that $\mathfrak{l}$ is the semidirect sum of $\mathfrak{a}$ and $\mathfrak{b}$.
%\end{thm}

For a series of motivations, which will be clear later on, we need to investigate Schur multipliers of Lie algebras in which we may iterate the notion of center.

\begin{defn}\label{center}If $\mathfrak{l}$ is a Lie algebra, the \textit{derived} Lie subalgebra of $\mathfrak{l}$ is defined as the smallest Lie ideal containing all the commutators  $[a,b]=ab-ba$  of $\mathfrak{l}$, that is,
\[[\mathfrak{l},\mathfrak{l}]=\langle [a,b]  \ | \ a,b \in \mathfrak{l} \rangle. \]
Note that the Lie algebra quotient $\mathfrak{l}/[\mathfrak{l},\mathfrak{l}]$ is always abelian,
 \end{defn}

Both  $Z(\mathfrak{l})$ and $[\mathfrak{l},\mathfrak{l}]$ are ideals of $\mathfrak{l}$ and so the size of their quotients may  measure  how far we are from having an abelian Lie algebra. Since we do not know whether $\mathfrak{l}/Z(\mathfrak{l})$ is abelian or not in general,  there is a way to generalize the above notions via appropriate series (see \cite{ weibel}).

\begin{defn}\label{uppercentralseries}The \textit{upper central series} of $\mathfrak{l}$ is the series
\[0=Z_0(\mathfrak{l}) \leq Z_1(\mathfrak{l})=Z(\mathfrak{l}) \leq Z_2(\mathfrak{l}) \leq \ldots \leq Z_i(\mathfrak{l}) \le Z_{i+1}(\mathfrak{l}) \leq \ldots, \]
where each $Z_i(\mathfrak{l})$ turns out to be an ideal of $\mathfrak{l}$ (called $i$th $center$ of $\mathfrak{l}$) and 
\[ \frac{Z_1(\mathfrak{l})}{Z_0(\mathfrak{l})}=Z(\mathfrak{l}), \frac{Z_2(\mathfrak{l})}{Z_1(\mathfrak{l})}=Z\left(\frac{\mathfrak{l}}{Z(\mathfrak{l})}\right),   \ldots, \frac{Z_{i+1}(\mathfrak{l})}{Z_i(\mathfrak{l})}=Z\left(\frac{\mathfrak{l}}{Z_i(\mathfrak{l})}\right), \ldots \]
is an abelian quotient Lie algebra. We say that $\mathfrak{l}$ is  \textit{nilpotent of class $c$} if $\mathfrak{l}=Z_c(\mathfrak{l})$, that is, if the upper central series of $\mathfrak{l}$ ends after a finite number $c \ge 0$ of steps.

By analogy, the \textit{lower central series} of $\mathfrak{l}$ is defined  inductively by
\[\gamma_1(\mathfrak{l})=\mathfrak{l} \geq \gamma_2(\mathfrak{l})=[\mathfrak{l},\mathfrak{l}]  \geq  \gamma_3(\mathfrak{l})=[[\mathfrak{l},\mathfrak{l}],\mathfrak{l}]  \geq \ldots \geq \gamma_i(\mathfrak{l}) \geq \gamma_{i+1}(\mathfrak{l}) \geq \ldots, \]
where each $\gamma_i(\mathfrak{l})$ turns out to be an ideal of $\mathfrak{l}$ (called $i$th $derived$ of $\mathfrak{l}$). %and each
%\[ \frac{\gamma_1(\mathfrak{l})}{\gamma_2(\mathfrak{l})}=\frac{\mathfrak{l}}{[\mathfrak{l},\mathfrak{l}]}, %\frac{\gamma_2(\mathfrak{l})}{\gamma_3(\mathfrak{l})}=  \frac{[\mathfrak{l},\mathfrak{l}]}{[[\mathfrak{l},\mathfrak{l}],\mathfrak{l}]},   \ldots, \frac{\gamma_{i+1}(\mathfrak{l})}{\gamma_i(\mathfrak{l})}, \ldots. \]
%is a quotient Lie algebra. Equivalentely, one can say that $\mathfrak{l}$ is  \textit{nilpotent of class $c$} if %$\gamma_{c+1}(\mathfrak{l})=0$, , that is, if the lower central series of $\mathfrak{l}$ ends after  $c+1$  steps.
\end{defn}

The presence of finite dimensional nilpotent Lie algebras plays a fundamental role in mathematical physics, because these structures have been observed in several quantum dynamical systems (see \cite{beltita}).  Probably one of the first nilpotent Lie algebras, which was used for the description of the behaviour of complex dynamical systems, in which bosons are involved, is due to Heisenberg and we report the formal definition here.

\begin{defn}\label{Heisenberg}  A finite dimensional Lie algebra  $\mathfrak{l}$ is called $Heisenberg$ provided that $[\mathfrak{l},\mathfrak{l}]=Z(\mathfrak{l})$ and $\mathrm{dim}([\mathfrak{l},\mathfrak{l}]) = 1$. Such algebras are odd dimensional with
basis $v_1, \ldots , v_{2m}, v$ and the only non--zero commutator between basis elements is $[v_{2i-1}, v_{2i}] = -
[v_{2i}, v_{2i-1}]= v$ for $i = 1,2, \ldots ,m$. The symbol $\mathfrak{h}(m)$
denotes the Heisenberg algebra of dimension $2m + 1$.
\end{defn}

 One can see easily that $\mathfrak{h}(m)$ is a nilpotent
 Lie algebra of class $2$, that is, such that
 $Z_2(\mathfrak{h}(m)))=\mathfrak{h}(m)$, of
 arbitrary dimension of the form $2m+1$,  and such that the
 Lie algebra quotient
 $\mathfrak{h}(m)/Z(\mathfrak{h}(m))$ is abelian,
 where $Z(\mathfrak{h}(m))$ a 1-dimensional Lie algebra.
 It turns out that the structure of finite dimensional
 nilpotent Lie algebras of low dimension involves strongly
 Heisenberg algebras and may be described in terms of
 generators and relations in the following way.

\begin{thm}[Classification of Finite Dimensional Nilpotent Lie Algebras of Dimension 3, 4 and  5]\label{classification}
Let $\mathfrak{l}$ be a finite dimensional nilpotent Lie algebra (over any field of characteristic $\neq 2$) and $\mathfrak{i}$ an abelian Lie algebra of dimension 1. Then
\begin{itemize}
\item[(1)] $\mathrm{dim} \ \mathfrak{l} =3$ if and only if $\mathfrak{l}$ is isomorphic to one of the following Lie algebras:
 \begin{itemize}
\item[-] $\mathfrak{l}_{3,1}$ abelian Lie algebra of dimension 3,
\item[-] $\mathfrak{l}_{3,2}  \simeq \mathfrak{h}(1)$.
\end{itemize}
\item[(2)]$\mathrm{dim} \ \mathfrak{l} =4$ if and only if    $\mathfrak{l}$ is isomorphic to one of the following Lie algebras:
 \begin{itemize}
\item[-] $\mathfrak{l}_{4,1} =  \mathfrak{l}_{3,1} \oplus \mathfrak{i}$,
\item[-] $\mathfrak{l}_{4,2} =  \mathfrak{l}_{3,2} \oplus \mathfrak{i}$
\item[-] $\mathfrak{l}_{4,3} = \langle v_1, v_2, v_3, v_4 \ | \ [v_1,v_2]=v_3, [v_1,v_3]=v_4 \rangle$
\end{itemize}
\item[(3)]$\mathrm{dim} \ \mathfrak{l} =5$ if and only if $\mathfrak{l}$ is isomorphic to one of the following Lie algebras:
\begin{itemize}
\item[-] $\mathfrak{l}_{5,1} = \mathfrak{l}_{4,1} \oplus \mathfrak{i} $,
\item[-] $\mathfrak{l}_{5,2} = \mathfrak{l}_{4,2} \oplus \mathfrak{i} \simeq \mathfrak{h}(1) \oplus \mathfrak{i} \oplus \mathfrak{i}$,
\item[-] $\mathfrak{l}_{5,3} = \mathfrak{l}_{4,3} \oplus \mathfrak{i} $,
\item[-] $\mathfrak{l}_{5,4} =  \langle v_1, v_2, v_3, v_4, v_5 \ | \ [v_1,v_2]=[v_3,v_4]=v_5 \rangle \simeq \mathfrak{h}(2) $,
\item[-] $\mathfrak{l}_{5,5} =  \langle v_1, v_2, v_3, v_4, v_5 \ | \ [v_1,v_2]=v_3, [v_1,v_3]=[v_2,v_4]=v_5 \rangle$,
\item[-] $\mathfrak{l}_{5,6} =  \langle v_1, v_2, v_3, v_4, v_5 \ | \ [v_1,v_2]=v_3, [v_1,v_3]=v_4,$ $$[v_1,v_4]=[v_2,,v_3]=v_5  \rangle,$$
\item[-] $\mathfrak{l}_{5,7} =  \langle v_1, v_2, v_3, v_4, v_5 \ | \ [v_1,v_2]=v_3, [v_1,v_3]=v_4, [v_1,v_4]=v_5 \rangle$,
\item[-] $\mathfrak{l}_{5,8} =  \langle v_1, v_2, v_3, v_4, v_5 \ | \ [v_1,v_2]=v_4, [v_1,v_3]=v_5 \rangle$,
\item[-] $\mathfrak{l}_{5,9} =  \langle v_1, v_2, v_3, v_4, v_5 \ | \ [v_1,v_2]=v_3, [v_1,v_3]=v_4, [v_2,v_3]=v_5 \rangle$.
\end{itemize}
\end{itemize}
\end{thm}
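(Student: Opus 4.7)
The plan is to proceed by induction on $n = \dim \mathfrak{l}$, splitting at each step into the \emph{decomposable} case (where $\mathfrak{l} \simeq \mathfrak{l}' \oplus \mathfrak{i}$ for a nilpotent Lie algebra $\mathfrak{l}'$ of dimension $n-1$, which is handled by the previous case) and the \emph{indecomposable} case, which in turn gets subdivided by the nilpotency class $c$ of $\mathfrak{l}$ and by the dimensions of the terms of the lower central series $\gamma_i(\mathfrak{l})$ and of $Z(\mathfrak{l})$. Throughout, the fact that $\mathfrak{l}/Z(\mathfrak{l})$ has strictly smaller dimension and is again nilpotent (with lower class) will let us leverage lower-dimensional normal forms.

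For $n=3$ this is almost immediate. Nilpotency of class at most $2$ forces $[\mathfrak{l},\mathfrak{l}] \subseteq Z(\mathfrak{l})$, so either $[\mathfrak{l},\mathfrak{l}] = 0$ (giving $\mathfrak{l}_{3,1}$) or $\dim [\mathfrak{l},\mathfrak{l}] = 1$, and in the latter case picking $v_1, v_2$ whose bracket $v_3$ spans $[\mathfrak{l},\mathfrak{l}]$ recovers $\mathfrak{h}(1) = \mathfrak{l}_{3,2}$. For $n=4$, the decomposable case produces $\mathfrak{l}_{4,1}$ and $\mathfrak{l}_{4,2}$ from the $n=3$ list. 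In the indecomposable case, dimension counting and the absence of abelian direct summands pin the class at $c=3$ with $\dim [\mathfrak{l},\mathfrak{l}]=2$ and $\dim \gamma_3(\mathfrak{l})=1$; choosing $v_1$ outside $[\mathfrak{l},\mathfrak{l}]$ so that $\operatorname{ad}(v_1)$ is non-nilpotent-free on a suitable complement produces the chain $v_3=[v_1,v_2]$, $v_4=[v_1,v_3]$, yielding $\mathfrak{l}_{4,3}$.

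For $n=5$, the decomposable case yields $\mathfrak{l}_{5,1}, \mathfrak{l}_{5,2}, \mathfrak{l}_{5,3}$ from the $n=4$ list. The indecomposable case is enumerated by the class $c \in \{2,3,4\}$. When $c=2$, $[\mathfrak{l},\mathfrak{l}] \subseteq Z(\mathfrak{l})$ and indecomposability forces $\dim [\mathfrak{l},\mathfrak{l}] \in \{1,2\}$; normal-forming the alternating bilinear map $\Lambda^2(\mathfrak{l}/Z(\mathfrak{l})) \to [\mathfrak{l},\mathfrak{l}]$ over $\mathbb{C}$ produces exactly $\mathfrak{l}_{5,4} \simeq \mathfrak{h}(2)$ in the first subcase and $\mathfrak{l}_{5,8}$ in the second. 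When $c=3$, one analyzes how $\gamma_2/\gamma_3$ is generated by adjoint actions of $\mathfrak{l}/\gamma_2$, and the two cases (whether $\gamma_3$ is reached by a single $\operatorname{ad}(v_i)$ or requires two independent ones) yield $\mathfrak{l}_{5,5}$ and $\mathfrak{l}_{5,9}$. When $c=4$, the lower central series has successive quotients of dimension $2,1,1,1$ and $\mathfrak{l}$ is necessarily $2$-generated, so the basis $v_1, v_2, v_3=[v_1,v_2], v_4=[v_1,v_3], v_5=[v_1,v_4]$ is forced; the only remaining free parameter is $[v_2,v_3] \in \langle v_5 \rangle$, and the two values (zero or, up to rescaling, $v_5$) produce $\mathfrak{l}_{5,7}$ and $\mathfrak{l}_{5,6}$ respectively.

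The main obstacle will be the dimension-$5$ indecomposable analysis, specifically verifying exhaustivity: one must show that every admissible system of structure constants can be brought to one of the listed normal forms via an element of $\mathrm{GL}(\mathfrak{l})$ respecting the filtration by $\gamma_i(\mathfrak{l})$. Over $\mathbb{C}$ this reduces to classifying tensors under the induced $\mathrm{GL}$-action on the graded pieces, where algebraic closedness eliminates discrete moduli. Pairwise non-isomorphism of the resulting representatives is easier and is handled by the numerical invariants $\bigl(\dim Z(\mathfrak{l}), \dim [\mathfrak{l},\mathfrak{l}], \dim \gamma_i(\mathfrak{l})\bigr)_{i \geq 2}$, which already distinguish the classes on the list; a residual ambiguity between, say, $\mathfrak{l}_{5,5}$ and $\mathfrak{l}_{5,6}$ is resolved by looking at $\dim \operatorname{ad}(v)([\mathfrak{l},\mathfrak{l}])$ for $v \notin [\mathfrak{l},\mathfrak{l}]$.
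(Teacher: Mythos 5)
There is no internal proof to compare with: the paper states Theorem \ref{classification} as a known result, imported from the cited classification literature (de Graaf, Morozov, Turkowski; the presentations also appear in Hardy--Stitzinger), and never proves it. Your plan --- induction on dimension, splitting off the decomposable case and filtering the indecomposables by nilpotency class and by the dimensions of the lower central series quotients --- is exactly the standard route those sources take, and your case analysis does land on the correct representatives: class $2$ in dimension $5$ gives $\mathfrak{l}_{5,4}\simeq\mathfrak{h}(2)$ and $\mathfrak{l}_{5,8}$, class $3$ gives $\mathfrak{l}_{5,5}$ and $\mathfrak{l}_{5,9}$, class $4$ gives the two filiform algebras $\mathfrak{l}_{5,6}$ and $\mathfrak{l}_{5,7}$.

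A few spots need tightening before this is a proof rather than a plan, and you have flagged the main one (exhaustivity) yourself. In the class-$4$ case the undetermined brackets are $[v_2,v_3]=\alpha v_4+\beta v_5$ \emph{and} $[v_2,v_4]=\mu v_5$, not just a multiple of $v_5$; Jacobi on $(v_1,v_2,v_3)$ forces $\mu=\alpha$, and the substitution $v_2\mapsto v_2-\alpha v_1$ removes both, after which rescaling $v_2$ turns $\beta\neq 0$ into $\beta=1$ --- this is the content hidden in your claim that the basis ``is forced'' (the phrase ``non-nilpotent-free'' in the dimension-$4$ step is likewise garbled, though the intended choice of $v_1$ with $\mathrm{ad}(v_1)$ surjecting $\gamma_2/\gamma_3$ onto $\gamma_3$ is the right one). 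Note also that these normalizations are rational in the structure constants, so the appeal to algebraic closedness of $\mathbb{C}$ is unnecessary --- and would in any case fall short of the statement, which is over an arbitrary field of characteristic $\neq 2$. Finally, the invariants $\bigl(\dim Z(\mathfrak{l}),\dim[\mathfrak{l},\mathfrak{l}],\dim\gamma_i(\mathfrak{l})\bigr)$ do \emph{not} separate $\mathfrak{l}_{5,6}$ from $\mathfrak{l}_{5,7}$ (both have one-dimensional center and lower central series of dimensions $3,2,1,0$); the finer invariant you mention --- existence of $v\notin[\mathfrak{l},\mathfrak{l}]$ with $\mathrm{ad}(v)$ vanishing on $[\mathfrak{l},\mathfrak{l}]$ --- is what is needed for that pair, whereas $\mathfrak{l}_{5,5}$ and $\mathfrak{l}_{5,6}$, the pair you single out, already differ in nilpotency class. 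With these repairs, and with the class-$3$, $\dim\gamma_3=1$ subcase written out (where indecomposability is what excludes $\mathfrak{l}_{4,3}\oplus\mathfrak{i}$), your sketch becomes the standard proof of the quoted classification.
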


An important observation can be done here: this is designed for our purposes.

\begin{lem}\label{technical}
The nilpotent Lie algebras $\mathfrak{l}_{4,3}$, $\mathfrak{l}_{5,4}$, $\mathfrak{l}_{5,5}$,  $\mathfrak{l}_{5,6}$, $\mathfrak{l}_{5,7}$, $\mathfrak{l}_{5,8}$ and $\mathfrak{l}_{5,9}$ may be realized as central extensions.
\end{lem}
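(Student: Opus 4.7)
My plan is to proceed by inspection of the presentations listed in Theorem \ref{classification}, producing in each of the seven cases a nonzero abelian ideal $\mathfrak{a}\subseteq Z(\mathfrak{l})$. Once such an $\mathfrak{a}$ is available, the canonical short exact sequence
$$0 \ \longrightarrow \ \mathfrak{a} \ \longrightarrow \ \mathfrak{l} \ \longrightarrow \ \mathfrak{l}/\mathfrak{a} \ \longrightarrow \ 0$$
realizes $\mathfrak{l}$ as a central extension of $\mathfrak{a}$ by $\mathfrak{l}/\mathfrak{a}$ in the sense of Definition \ref{extensions}, and the conclusion of the lemma is immediate.

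The guiding observation is that any basis vector which appears only on the right-hand side of the listed defining commutators necessarily lies in the center of $\mathfrak{l}$, because in a Lie algebra given by generators and relations no further brackets exist beyond those forced by the declared ones and the Jacobi identity. Applying this principle, I would take $\mathfrak{a}=\langle v_4\rangle$ in $\mathfrak{l}_{4,3}$, $\mathfrak{a}=\langle v_5\rangle$ in each of $\mathfrak{l}_{5,4}$, $\mathfrak{l}_{5,5}$, $\mathfrak{l}_{5,6}$ and $\mathfrak{l}_{5,7}$, and $\mathfrak{a}=\langle v_4,v_5\rangle$ in $\mathfrak{l}_{5,8}$ and $\mathfrak{l}_{5,9}$. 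Each such subspace is visibly abelian, and each of the corresponding quotients $\mathfrak{l}/\mathfrak{a}$ can be recognized as a nilpotent Lie algebra appearing at a lower dimension in Theorem \ref{classification} (for instance $\mathfrak{l}_{4,3}/\langle v_4\rangle\simeq\mathfrak{l}_{3,2}=\mathfrak{h}(1)$, and $\mathfrak{l}_{5,7}/\langle v_5\rangle\simeq\mathfrak{l}_{4,3}$), which both confirms well-definedness of the quotient and gives a concrete description of the base Lie algebra $\mathfrak{b}=\mathfrak{l}/\mathfrak{a}$.

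The main point requiring a moment of care is the verification that the chosen $\mathfrak{a}$ actually sits inside $Z(\mathfrak{l})$ in those cases where the top basis vector is produced by two distinct commutators, namely $[v_1,v_3]=[v_2,v_4]=v_5$ in $\mathfrak{l}_{5,5}$, $[v_1,v_4]=[v_2,v_3]=v_5$ in $\mathfrak{l}_{5,6}$, and the two separate outputs $[v_1,v_3]=v_4$, $[v_2,v_3]=v_5$ in $\mathfrak{l}_{5,9}$. In these three algebras I would invoke the Jacobi identity on triples such as $(v_1,v_2,v_3)$ and $(v_1,v_2,v_4)$ together with the nilpotency ensured by the lower central series terminating at $\langle v_5\rangle$ (respectively $\langle v_4,v_5\rangle$) to check that each bracket $[v_5,v_i]$ (respectively $[v_4,v_i],[v_5,v_i]$) is forced to vanish. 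Once centrality of $\mathfrak{a}$ is confirmed in every case, the above exact sequence is central in the sense of Definition \ref{extensions} and the lemma is established.
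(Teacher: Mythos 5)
Your argument is correct, but it takes a more self-contained route than the paper. The paper does not verify centrality by hand: it cites the explicit constructions of Hardy and Stitzinger, where $\mathfrak{l}_{4,3}$, $\mathfrak{l}_{5,5}$, $\mathfrak{l}_{5,8}$ are exhibited as central extensions of a $1$-dimensional ideal by $\mathfrak{h}(1)$ or $\mathfrak{h}(1)\oplus\mathfrak{i}$, and $\mathfrak{l}_{5,6}$, $\mathfrak{l}_{5,7}$, $\mathfrak{l}_{5,9}$ as three inequivalent central extensions of a $1$-dimensional ideal by $\mathfrak{l}_{4,3}$, with $\mathfrak{h}(2)=\mathfrak{l}_{5,4}$ handled by inspection. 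You instead read off a central ideal directly from the multiplication tables of Theorem \ref{classification} and quotient by it; this is more elementary and independent of the literature, and it is equally legitimate that you allow $2$-dimensional central kernels for $\mathfrak{l}_{5,8}$ and $\mathfrak{l}_{5,9}$ (Definition \ref{extensions} only asks for an abelian ideal inside $Z(\mathfrak{l})$), whereas the paper sticks to $1$-dimensional kernels, which is what it later reuses (e.g.\ $\mathfrak{l}_{5,5}$ as an extension of $\mathfrak{i}$ by $\mathfrak{h}(1)\oplus\mathfrak{i}$ in the proof of Theorem \ref{main2}) and what lets it record the non-equivalence of the three extensions over $\mathfrak{l}_{4,3}$. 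One small caution: your justification of centrality via ``generators and relations plus the Jacobi identity'' is not quite the right mechanism. If the lists in Theorem \ref{classification} were genuine free presentations, brackets such as $[v_5,v_1]=[[v_1,v_3],v_1]$ in $\mathfrak{l}_{5,5}$ would \emph{not} be forced to vanish by the Jacobi identity alone; what makes $v_4$, $v_5$ central is the standard convention that the displayed brackets are the complete set of non-zero structure constants, all unlisted brackets of basis vectors being zero. With that reading your choice of kernels and your identifications of the quotients ($\mathfrak{h}(1)$, $\mathfrak{h}(1)\oplus\mathfrak{i}$, $\mathfrak{l}_{4,3}$, abelian) are all immediate and the proof goes through; the extra Jacobi verification you sketch is unnecessary rather than wrong.
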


\begin{proof} We begin to consider $\mathfrak{l}_{4,3}$, which is  described at \cite[Page 3530]{har1}. We note that  $\mathfrak{l}_{4,3}$ is nilpotent of class $2$ and is realized as central extension of a  1-dimensional ideal $\mathfrak{i}=\mathfrak{a} \subseteq Z(\mathfrak{l}_{4,3})$ by the  quotient  $\mathfrak{b}=\mathfrak{l}_{4,3}/\mathfrak{a}=\mathfrak{h}(1)$.  Definition \ref{extensions} is satisfied. On the other hand, it is clear that the Heisenberg algebra of dimension 5 is realized as central extension, just looking at the way in which it is defined. Then we look at \cite[Pages 3531--3532, Case 2]{har1} and see that $\mathfrak{l}_{5,5}$ is also nilpotent of class $2$ and  realized as central extension of a  1-dimensional ideal $\mathfrak{i}=\mathfrak{a} \subseteq Z(\mathfrak{l}_{5,5})$ by the Lie algebra quotient  $\mathfrak{b}=\mathfrak{l}_{5,5}/\mathfrak{a}=\mathfrak{h}(1)+\mathfrak{i}$. Again Definition \ref{extensions} is satisfied. On the other hand, we may look at \cite[Pages 3531--3532]{har1} and see that $\mathfrak{l}_{5,8}$ may be  realized as central extension of a  1-dimensional ideal $\mathfrak{i}=\mathfrak{a} \subseteq Z(\mathfrak{l}_{5,5})$ by the Lie algebra quotient  $\mathfrak{b}=\mathfrak{l}_{5,5}/\mathfrak{a}=\mathfrak{h}(1)+\mathfrak{i}$. Again Definition \ref{extensions} is satisfied.  Then we consider $\mathfrak{l}_{5,6}$. We  look  at \cite[Page 4206, Case 2.1]{har} and see that it is nilpotent of class $2$ and realized as central extension of a 1-dimensional ideal  $\mathfrak{i}=\mathfrak{a} \subseteq \mathfrak{l}_{5,6}=\mathfrak{l}_{4,3}$. Definition \ref{extensions} is satisfied and, in the same way, we can construct  $\mathfrak{l}_{5,7}$ and $\mathfrak{l}_{5,9}$ as central extensions with central ideal of dimension one. These three central extensions of a central ideal of dimension one by $\mathfrak{l}_{4,3}$ are not equivalent, as indicated by the computations at \cite[Page 4206, Case 2.1]{har}. The result follows.
\end{proof}

 We will quote now some recent results on Schur multipliers of Lie algebras, in order to understand numerically the size of $M(\mathfrak{l})$

\begin{thm}[See \cite{nr1}, Lemmas 2.4 and 2.5] \label{ab} \

\begin{itemize}
\item[(i)]$\mathrm{dim} \ M(\mathfrak{h}(1))=2$.
\item[(ii)]$\mathrm{dim} \ M(\mathfrak{h}(m)) =2m^2-m-1$ for all $m\geq 2$.
\item[(iii)]A nilpotent finite dimensional Lie algebra $\mathfrak{l}$ of dimension $n$ has
$$\mathrm{dim} \ M(\mathfrak{l}) \le \frac{1}{2}n(n-1)$$
and the bound is exactly $\frac{1}{2}n(n-1)
$ if and only if $\mathfrak{l}$ is abelian.
\end{itemize}
\end{thm}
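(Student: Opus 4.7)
For (i), I would compute $\dim H^2(\mathfrak{h}(1),\mathbb{C})$ directly from the Chevalley--Eilenberg complex with trivial coefficients. With dual basis $\{v_1^*,v_2^*,v^*\}$ and unique non-zero bracket $[v_1,v_2]=v$, one has $\dim\wedge^2\mathfrak{h}(1)^*=3$ and $\dim\wedge^3\mathfrak{h}(1)^*=1$. The differential $d^1$ annihilates $v_1^*$ and $v_2^*$ and sends $v^*$ to $-v_1^*\wedge v_2^*$, so $\dim\mathrm{im}(d^1)=1$. A short case-by-case check shows $d^2\equiv 0$ on all three basis $2$-cochains: any non-zero contribution would require the single bracket $v$ to be paired with a second factor containing $v^*$, which is unavailable. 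Hence $\dim M(\mathfrak{h}(1))=3-1=2$.

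For (ii), my plan is to extend this Chevalley--Eilenberg calculation to $\mathfrak{h}(m)$ with $m\ge 2$. The image of $d^1$ remains one-dimensional, spanned by $\sum_{i=1}^m v_{2i-1}^*\wedge v_{2i}^*$. The subtle point is that $d^2$ no longer vanishes: the $2$-cochains $v_j^*\wedge v^*$ for $j=1,\ldots,2m$ now map to the non-zero sums $\sum_{i\,:\,\{2i-1,2i\}\not\ni j}v_{2i-1}^*\wedge v_{2i}^*\wedge v_j^*$, while $d^2$ still kills every $v_i^*\wedge v_j^*$ with no $v^*$ factor (the only non-trivial bracket leaves the common kernel of $v_i^*$ and $v_j^*$). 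A careful dependency check shows the $2m$ images above are linearly independent in $\wedge^3\mathfrak{h}(m)^*$, so $\dim\ker d^2=\binom{2m+1}{2}-2m=2m^2-m$ and $\dim M(\mathfrak{h}(m))=2m^2-m-1$. The main obstacle is the combinatorial bookkeeping needed to verify the independence of the $d^2$-images once $m$ grows.

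For (iii), the plan is to invoke Ellis's non-abelian exterior square $\mathfrak{l}\wedge\mathfrak{l}$, which sits in the natural exact sequence $0\to M(\mathfrak{l})\to\mathfrak{l}\wedge\mathfrak{l}\to[\mathfrak{l},\mathfrak{l}]\to 0$ and is a quotient of the usual $\wedge^2\mathfrak{l}$. This at once yields the Moneyhun-type inequality $\dim M(\mathfrak{l})+\dim[\mathfrak{l},\mathfrak{l}]\le\binom{n}{2}$. When $\mathfrak{l}$ is abelian, both Chevalley--Eilenberg differentials vanish identically, so $M(\mathfrak{l})\cong\wedge^2\mathfrak{l}^*$ has dimension exactly $\binom{n}{2}$ and the bound is attained. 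Conversely, if $\mathfrak{l}$ is non-abelian then $\dim[\mathfrak{l},\mathfrak{l}]\ge 1$, and the inequality becomes strict, so the equality case forces $[\mathfrak{l},\mathfrak{l}]=0$. The only delicate step is the Ellis sequence itself, which is standard in the low-dimensional homology of Lie algebras but must be cited correctly.
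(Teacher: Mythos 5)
Your computations are correct, but note that the paper itself supplies no proof of this theorem: it is imported verbatim from the cited sources, with (i)--(ii) taken from Niroomand--Russo and (iii) attributed to Moneyhun. Your argument is therefore a self-contained verification rather than a variant of an internal proof. For (i)--(ii) your Chevalley--Eilenberg computation checks out: $d^1$ has one-dimensional image spanned (up to sign) by $\sum_i v_{2i-1}^*\wedge v_{2i}^*$, $d^2$ kills every $v_i^*\wedge v_j^*$ without a $v^*$ factor, and the "combinatorial bookkeeping" you worry about is short -- each monomial $v_{2i-1}^*\wedge v_{2i}^*\wedge v_j^*$ contains exactly one canonical pair $\{2i-1,2i\}$, so it determines $j$ uniquely; hence the $2m$ images of the $v_j^*\wedge v^*$ have pairwise disjoint supports, are nonzero precisely when $m\ge 2$, and are independent, giving $\dim\ker d^2=\binom{2m+1}{2}-2m$ and $\dim M(\mathfrak{h}(m))=2m^2-m-1$ (and the same computation explains why $m=1$ is exceptional, since the sums are then empty). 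For (iii) your route through Ellis's non-abelian exterior square and the exact sequence $0\to M(\mathfrak{l})\to\mathfrak{l}\wedge\mathfrak{l}\to[\mathfrak{l},\mathfrak{l}]\to 0$ is valid (the defining relations of $\mathfrak{l}\wedge\mathfrak{l}$ include bilinearity and alternation, so it is a quotient of the ordinary $\wedge^2\mathfrak{l}$), and it even yields the sharper bound $\dim M(\mathfrak{l})+\dim[\mathfrak{l},\mathfrak{l}]\le\tfrac12 n(n-1)$, from which the equality characterization follows as you say; Moneyhun's original proof instead works with free presentations and the Hopf-type formula $M(\mathfrak{l})\cong(R\cap[F,F])/[F,R]$, so your argument trades that machinery for Ellis's sequence, which must indeed be cited for Lie algebras (Ellis's result identifies the kernel with $H_2(\mathfrak{l},\mathbb{C})$). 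The only point worth making explicit is that the paper defines $M(\mathfrak{l})$ as second cohomology while the Ellis sequence concerns second homology; over $\mathbb{C}$ with trivial coefficients these have equal dimension, so nothing is lost.
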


The condition (iii) above is due to Moneyhun \cite{mon} and allows us to introduce the following notion, which has been studied in \cite{nr3}.

\begin{defn}\label{corank} Given a nilpotent Lie algebra $\mathfrak{l}$ of dimension $n$, there exists a (unique) integer  $t(\mathfrak{l})\geq 0$ such that
\begin{equation}
t(\mathfrak{l})=\frac{1}{2}n(n - 1) - \mathrm{dim}\ M(\mathfrak{l}).
\end{equation}
This number is called \textit{corank} of $\mathfrak{l}$ (in the sense of Berkovich--Zhou).
\end{defn}
The  invariant of Definition \ref{corank} is originally due to Berkovich \cite{berkovich} and Zhou \cite{zhou}, who  introduced it for finite $p$-groups. It clearly gives information on the gap between the size of $M(\mathfrak{l})$ and the critical value $\frac{1}{2}n(n - 1)$ in which we find abelian Lie algebras only (see Theorem \ref{ab}). The same information can be rephrased in the context of groups as well. Of course, the case of Lie algebra is more rich than the case of finite $p$-groups and we have more instruments than the group case; this has allowed to describe the structure of all nilpotent Lie algebras in \cite{har} and \cite{har1} with $t(\mathfrak{l}) = 3, 4, 5, 6, 7 ,8$.

 \begin{thm}[Classification of Finite Dimensional Nilpotent Lie algebras via $t(\mathfrak{l}) \le 6$]\label{tl} Let $\mathfrak{l}$ be an $n$-dimensional nilpotent Lie algebra. Then
\begin{itemize}
\item[(i)]$t(\mathfrak{l})=0$ if and only if $\mathfrak{l}$ is abelian;
\item[(ii)]$t(\mathfrak{l})=1$ if and only if $ \mathfrak{l} \simeq \mathfrak{h}(1)$;
\item[(iii)]$t(\mathfrak{l})=2$ if and only if $ \mathfrak{l}  \simeq \mathfrak{l}_{4,2} \simeq \mathfrak{h}(1) \oplus \mathfrak{i}$;
\item[(iv)]$t(\mathfrak{l})=3$ if and only if $ \mathfrak{l} \simeq \mathfrak{l}_{5,2} \simeq \mathfrak{l}_{4,2} \oplus \mathfrak{i}$;
\item[(v)]$t(\mathfrak{l})=4$ if and only if either $ \mathfrak{l} \simeq \mathfrak{l}_{5,2} \oplus \mathfrak{i}$, or $ \mathfrak{l} \simeq  \mathfrak{l}_{4,3}$ or $ \mathfrak{l} \simeq \mathfrak{l}_{5,8}$;
\item[(vi)] $t(\mathfrak{l})=5$ if and only if $ \mathfrak{l} \simeq \mathfrak{l}_{7,2} \simeq \mathfrak{l}_{5,2}\oplus \mathfrak{i} \oplus \mathfrak{i}$ or $ \mathfrak{l} \simeq \mathfrak{h}(2)$.
\item[(vii)] $t(\mathfrak{l})=6$ if and only if either  $ \mathfrak{l} \simeq \mathfrak{l}_{4,2}\oplus \mathfrak{i}$, or $\mathfrak{l} \simeq \mathfrak{l}_{5,5}$, or $ \mathfrak{l} \simeq \mathfrak{h}(2) \oplus \mathfrak{i}$, or $\mathfrak{l} \simeq \mathfrak{l}_{5,8} \oplus \mathfrak{i}$, or $\mathfrak{l} \simeq \mathfrak{l}_{8,2} \simeq \mathfrak{l}_{7,2} \oplus \mathfrak{i}$;
%\item[(viii)] $t(\mathfrak{l})=7$ if and only if $ \mathfrak{l} \simeq \mathfrak{l}_{5,9}$, or $ \mathfrak{l} \simeq\mathfrak{l}_{5,6}$, or $ \mathfrak{l} \simeq \mathfrak{l}_{5,7}$, or
%$$\mathfrak{l} \simeq \mathfrak{l}_{6,21} = \langle \rangle$$
%or $\mathfrak{l} \simeq \mathfrak{l}_{6,22}$, or $\mathfrak{l} \simeq \mathfrak{h}(3)$, or $\mathfrak{l} \simeq \mathfrak{h}(2) \oplus \mathfrak{i} \oplus \mathfrak{i}$, or $\mathfrak{l} \simeq \mathfrak{l}_{9,2} \simeq \mathfrak{l}_{8,2} \oplus \mathfrak{i}$
%\item[(ix)] $t(\mathfrak{l})=8$ if and only if either  $ \mathfrak{l} \simeq \mathfrak{l}_{4,3}\oplus \mathfrak{i} \oplus \mathfrak{i}$, or $\mathfrak{l} \simeq \mathfrak{l}_{5,5} \oplus \mathfrak{i}$, or $\mathfrak{l} \simeq \mathfrak{l}_{5,8} \oplus \mathfrak{i} \oplus \mathfrak{i}$, or $\mathfrak{l} \simeq \mathfrak{h}(2) \oplus \mathfrak{i}  \oplus \mathfrak{i} \oplus \mathfrak{i}$, or $ \mathfrak{l} \simeq \mathfrak{h}(3) \oplus \mathfrak{i}$, or $\mathfrak{l} \simeq \mathfrak{l}_{10,2} \simeq  \mathfrak{l}_{9,2} \oplus  \mathfrak{i}$.
\end{itemize}
\end{thm}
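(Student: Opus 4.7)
My plan is to prove the theorem case by case, using the identity
\begin{equation*}
\dim M(\mathfrak{l}) = \tfrac{1}{2}n(n-1) - t(\mathfrak{l})
\end{equation*}
from Definition \ref{corank}. Since $\dim M(\mathfrak{l}) \ge 0$, each prescribed value $t(\mathfrak{l}) \le 6$ constrains $n = \dim \mathfrak{l}$ to a finite range, so the problem reduces to computing $\dim M(\mathfrak{l})$ for the algebras listed in Theorem \ref{classification} (and, for dimensions $6,7,8$, the analogous classifications imported from \cite{har, har1}) and matching the values against the targets.

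Case (i) will be immediate from Theorem \ref{ab}(iii): the bound $\tfrac{1}{2}n(n-1)$ is saturated exactly when $\mathfrak{l}$ is abelian. For cases (ii)--(iv) the non-abelian constraint together with Theorem \ref{ab}(i)--(ii) and the Künneth-type identity
\begin{equation*}
\dim M(\mathfrak{a}\oplus\mathfrak{b}) = \dim M(\mathfrak{a}) + \dim M(\mathfrak{b}) + \dim\tfrac{\mathfrak{a}}{[\mathfrak{a},\mathfrak{a}]}\cdot\dim\tfrac{\mathfrak{b}}{[\mathfrak{b},\mathfrak{b}]},
\end{equation*}
applied to $\mathfrak{h}(1)\oplus\mathfrak{i}^{k}$ for $k=0,1,2$, should isolate the unique candidate each time: $\mathfrak{h}(1)$, $\mathfrak{h}(1)\oplus\mathfrak{i}$, and $\mathfrak{h}(1)\oplus\mathfrak{i}\oplus\mathfrak{i}$, respectively.

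For (v)--(vii) I would lean on Lemma \ref{technical}: the algebras $\mathfrak{l}_{4,3}, \mathfrak{l}_{5,5}, \mathfrak{l}_{5,6}, \mathfrak{l}_{5,7}, \mathfrak{l}_{5,8}, \mathfrak{l}_{5,9}$ are central extensions, so the five-term Hochschild--Serre sequence attached to their central ideals expresses $M(\mathfrak{l})$ in terms of $M(\mathfrak{l}/Z)$ and abelianization data, both computable from Theorem \ref{classification}. Combined with the Künneth formula above, this reduces every $\dim M$ occurring in the candidate list to explicit arithmetic in the Heisenberg values of Theorem \ref{ab}. Tabulating $t(\mathfrak{l})$ for each candidate --- including the $6,7,8$-dimensional entries such as $\mathfrak{h}(2)\oplus\mathfrak{i}$, $\mathfrak{l}_{7,2}$, $\mathfrak{l}_{8,2}$ needed in (vi) and (vii) --- should produce exactly the listed families.

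The main obstacle will be the indecomposable, non-Heisenberg algebras $\mathfrak{l}_{5,5}, \mathfrak{l}_{5,6}, \mathfrak{l}_{5,7}, \mathfrak{l}_{5,8}, \mathfrak{l}_{5,9}$ (and $\mathfrak{l}_{4,3}$), for which neither Heisenberg formulas nor the Künneth identity suffice directly: one must either run a Chevalley--Eilenberg computation, writing $H^{2}$ as $2$-cocycles modulo $2$-coboundaries dictated by the presentations of Theorem \ref{classification}, or import the explicit values from the tables of \cite{har, har1, nr2, nr3}. The delicate closing step is to verify that no candidate outside the stated list meets the prescribed $t$-value --- an exhaustive matching problem which becomes routine once all $\dim M$ values have been assembled.
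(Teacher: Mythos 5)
The central reduction in your plan has a genuine gap. From $t(\mathfrak{l})=\tfrac12 n(n-1)-\dim M(\mathfrak{l})$ and $\dim M(\mathfrak{l})\ge 0$ you only obtain $\tfrac12 n(n-1)\ge t(\mathfrak{l})$, which is a \emph{lower} bound on $n$, not an upper bound: fixing $t(\mathfrak{l})\le 6$ does not confine $\dim\mathfrak{l}$ to a finite range. (Case (i) itself already shows this: every abelian algebra, of any dimension, has $t=0$.) Nothing you cite excludes, say, a $40$-dimensional non-abelian nilpotent algebra of corank $4$, so the "exhaustive matching" you defer to the end is not exhaustive. What actually makes the search finite is a sharpened upper bound on $\dim M$ for \emph{non-abelian} nilpotent algebras — e.g. $\dim M(\mathfrak{l})\le \tfrac12(n-1)(n-2)+1$, with equality iff $\mathfrak{l}\simeq\mathfrak{h}(1)\oplus\mathfrak{i}^{\,n-3}$ (this is the content of \cite{nr1,nr2}, not of Theorem \ref{ab}(iii)), or the refinement $\dim M(\mathfrak{l})\le\tfrac12(n+m-2)(n-m-1)+1$ with $m=\dim[\mathfrak{l},\mathfrak{l}]$ used by Hardy and Stitzinger. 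These give $t(\mathfrak{l})\ge n-2$ for non-abelian $\mathfrak{l}$, hence $n\le t+2\le 8$, and their equality cases directly produce the high-dimensional entries $\mathfrak{l}_{7,2}$ and $\mathfrak{l}_{8,2}$. Without a bound of this type your case analysis cannot be closed.

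A related practical defect: for dimensions $6$, $7$, $8$ you propose to import "analogous classifications" from \cite{har,har1}, but those are not classification-by-dimension papers (dimension $6$ is \cite{degraaf,morozov}, and nilpotent Lie algebras of dimension $8$ are not completely classified); the literature handles $n=7,8$ precisely through the extremal cases of the bounds above, which force $\dim[\mathfrak{l},\mathfrak{l}]=1$ and hence $\mathfrak{l}\simeq\mathfrak{h}(k)\oplus\mathfrak{i}^{\,c}$, whose multipliers are Theorem \ref{ab}. Also note that the five-term Hochschild--Serre sequence by itself yields only estimates, so for $\mathfrak{l}_{4,3}$ and $\mathfrak{l}_{5,5},\ldots,\mathfrak{l}_{5,9}$ you do need the explicit cocycle computations or the tabulated values, as you anticipate. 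For calibration: the paper offers no proof of Theorem \ref{tl} at all — it quotes the classification from \cite{har1} (with \cite{har} covering $t=7,8$) — so your attempt is necessarily a reconstruction; had the tabulation been carried out it would also have flagged that the first entry of item (vii) must be $\mathfrak{l}_{4,3}\oplus\mathfrak{i}$ rather than $\mathfrak{l}_{4,2}\oplus\mathfrak{i}$, since the latter is $\mathfrak{l}_{5,2}$, which item (iv) already assigns corank $3$.
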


%More generally, the following result is true.

%\begin{thm}[See \cite{nr2}, Theorem 3.1]\label{mt}Let $L$ be a nilpotent Lie algebra of $\mathrm{dim} \ \mathfrak{l}=n$ and$\mathrm{dim} \ [L,L]=k\geq 1$. Then\[\mathrm{dim}\ H^2(\mathfrak{l}, \mathbb{C}^*)\leq \frac{1}{2}(n+k-2)(n-k-1)+1.\] Moreover, if $k=1$, then the equality
%holds if and only if $L\cong H(1)\oplus A(n-3)$.
%\end{thm}

%An upstanding result of Theorem \ref{mt} shows that
%\begin{equation}\label{eq}
%\mathrm{dim} \ \mathfrak{l}\leq t(\mathfrak{l})-\frac{k(k-1)}{2}+2.
%\end{equation}
% Using notations and terminology of \cite{ es2, har, har1},  we may using the equation \ref{eq} to simplify the calculation of \cite{har1}
% and list all Lie algebras with $t(\mathfrak{l})\leq 8$.

There are a series of open problems which prevent us to have a more general approach for higher dimensions. 

\begin{op}The classification for $t ( \mathfrak{l} ) \ge 9$ is not known in general. There are some partial classification, imposing restrictions of various nature on $\mathfrak{l}$. 
\end{op}

On the other hand, \cite{har} shows that for  $t ( \mathfrak{l} ) \in \{7,8\}$ we have nilpotent Lie algebras of dimension higher than $5$ for which an argument as in  Lemma \ref{technical} is not available. Therefore the main results of this paper will show that it is possible to realize physically the above nilpotent Lie algebras using concepts of quantum physics. This is the first time that the classification above may be interpreted in terms of concrete dynamical systems, and we will see that this approach is so sophisticated that may cover in fact all the Lie algebras of Theorem \ref{tl}.

\section{Main results}
In the present section we mainly refer to pseudo-bosons, considering bosons as a particular case of those and assume the reader is familiar with the terminology which is used in \cite{baginbagbook, a2, bagBS, bagnew1, bagnew2, barnett, gold, schu}. 

In particular, one needs to have clear the notions of  

\begin{itemize}
\item[-] bounded operators in Hilbert spaces (see \cite[Definition 3.1]{bagrus2018})  
\item[-]Riesz basis (see \cite{schu}), 
\item[-] $\mathcal{G}$-quasi basis in \cite[Definition 3.3]{bagrus2018}, 
\item[-]canonical commuting relations (see \cite{baginbagbook}), 
\item[-] distribution (see \cite[Eq. (3.2)]{bagrus2018}), 
\item[-]Assumption $\mathcal{D}$-pb 1,  Assumption $\mathcal{D}$-pb 2, Assumption $\mathcal{D}$-pb 3, Assumption $\mathcal{D}$-pbw 3, Assumption $\mathcal{D}$-pbs 3, reported in \cite[Section 3]{bagrus2018}. 
\end{itemize}

The above concepts are strictly related to the field of the functional analysis and allow us to construct the framework  in which pseudo-bosons may form the structure of a Lie algebra, which is described in the previous section in abstract.
 
%We  show the existence of pseudo bosons in the classification of finite dimensional nilpotent Lie algebras (over $\mathbb{C}$) via the corank of Berkovich--Zhou.

\begin{thm}\label{main1}There is at least one family of  pseudo-bosonic operators and a nilpotent Lie algebra $\mathfrak{l}$ of finite dimension for arbitrary values of $t(\mathfrak{l})$.
\end{thm}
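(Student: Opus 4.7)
My plan is to construct, for every non-negative integer $t$, a finite-dimensional nilpotent Lie algebra $\mathfrak{l}(t)$ with $t(\mathfrak{l}(t)) = t$ together with a family of pseudo-bosonic operators whose Lie span realises $\mathfrak{l}(t)$. The guiding idea is that direct sums of a single Heisenberg algebra with abelian pieces already sweep out every possible value of the corank, so the problem splits into (a) pinning down such an algebraic family and (b) lifting the pseudo-bosonic construction of \cite{bagrus2018} so as to accommodate extra commuting degrees of freedom.

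For the algebraic part I would take
\[
\mathfrak{l}(t) := \mathfrak{h}(1) \oplus \mathfrak{i}^{\,t-1} \quad (t \geq 1), \qquad \mathfrak{l}(0) := \mathfrak{i},
\]
with $\mathfrak{i}^{\,t-1}$ a $(t-1)$-dimensional abelian Lie algebra. Using $\dim M(\mathfrak{h}(1)) = 2$ and $\dim M(\mathfrak{i}^{\,t-1}) = \binom{t-1}{2}$ from Theorem \ref{ab}, together with the standard additivity relation
\[
\dim M(\mathfrak{l}_1 \oplus \mathfrak{l}_2) \;=\; \dim M(\mathfrak{l}_1) + \dim M(\mathfrak{l}_2) + \dim\!\left(\mathfrak{l}_1/[\mathfrak{l}_1,\mathfrak{l}_1]\right)\cdot \dim\!\left(\mathfrak{l}_2/[\mathfrak{l}_2,\mathfrak{l}_2]\right)
\]
(the Lie-algebraic analogue of the Schur multiplier Künneth formula; see \cite{weibel, nr1}), a short computation yields $t(\mathfrak{l}(t)) = t$ for every $t \geq 0$. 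Thus the family $\{\mathfrak{l}(t)\}_{t \geq 0}$ attains every corank, which is consistent with the entries of Theorem \ref{tl} (for example $\mathfrak{l}(3)=\mathfrak{l}_{5,2}$ and $\mathfrak{l}(6)=\mathfrak{l}_{8,2}$).

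For the operator-theoretic part I would start from a single pseudo-bosonic pair $(a,b)$ on a Hilbert space $\mathcal{H}$ satisfying $[a,b] = \mathbb{1}$ together with Assumptions $\mathcal{D}$-pb~1--3 (and either $\mathcal{D}$-pbw~3 or $\mathcal{D}$-pbs~3) recalled from \cite[Section 3]{bagrus2018}; the Lie span $\langle a,b,\mathbb{1}\rangle$ is already isomorphic to $\mathfrak{h}(1)$, as exploited in \cite{bagrus2018}. To realise $\mathfrak{l}(t)$ for $t \geq 1$ I would pass to $\mathcal{H} \otimes \mathcal{K}$ for an auxiliary Hilbert space $\mathcal{K}$ and adjoin $t-1$ commuting bounded self-adjoint operators $c_1, \dots, c_{t-1}$ on $\mathcal{K}$ (e.g.\ multiplications by independent bounded real functions), extended by the identity on the other factor; $a,b,\mathbb{1}$ are extended analogously. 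The span $\langle a,b,\mathbb{1},c_1,\dots,c_{t-1}\rangle$ then closes under the bracket to $\mathfrak{h}(1) \oplus \mathfrak{i}^{\,t-1} = \mathfrak{l}(t)$. For $t=0$ any abelian family of commuting bounded operators suffices.

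The step I expect to be the main obstacle is checking that the tensor construction preserves the functional-analytic apparatus that distinguishes genuine pseudo-bosons from the purely algebraic CCR: the $\mathcal{G}$-quasi basis property, the common invariant dense domain and the Riesz-basis-type hypotheses must all transport from $\mathcal{H}$ to $\mathcal{H} \otimes \mathcal{K}$. Choosing the $c_j$ bounded should eliminate any domain pathologies they could introduce, reducing the task to verifying that tensoring a $\mathcal{G}$-quasi basis on $\mathcal{H}$ with an orthonormal basis on $\mathcal{K}$ still satisfies each of the $\mathcal{D}$-pb conditions; I would carry this out assumption by assumption against the list in \cite[Section 3]{bagrus2018}, which would complete the proof.
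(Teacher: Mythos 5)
Your proposal is correct and follows essentially the same route as the paper: both obtain arbitrary corank from a Heisenberg algebra summed with abelian ideals, $\mathfrak{h}\oplus\mathfrak{i}\oplus\cdots\oplus\mathfrak{i}$, realized by (pseudo-)bosonic ladder operators together with operators commuting with everything else. The only differences are minor: you fix the Heisenberg part to $\mathfrak{h}(1)$ and verify $t(\mathfrak{l}(t))=t$ by an explicit K\"unneth-type computation, whereas the paper allows $\mathfrak{h}(m)$ and quotes the corank value from \cite{nr3}, and your concern about transporting the $\mathcal{D}$-pb assumptions through the tensor construction is more caution than the paper itself exercises, since there the abelian generators are simply taken to be (multiples of) the identity operator.
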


\begin{proof} About the notations and the terminology for pseudo-bosonic operators, we will use the same in \cite[Section 3]{bagrus2018}. We begin to consider the nilpotent Lie algebra \begin{equation}
\mathfrak{l} = \mathfrak{h}(m) \oplus \underbrace{\mathfrak{i} \oplus \ldots \oplus \mathfrak{i}}_{k-\mbox{times}}.
\end{equation}
This Lie algebra has dimension $(2m+1)+k$, where $m \ge 1$ and $k \ge 0$ are arbitrary integers.

If $k=0$, then $\mathfrak{l} \simeq \mathfrak{h}(m)$. In this case a possible example  can be constructed with $m$-dimensional bosonic operators, and with their pseudo-bosonic versions.

Let $x_j$, $j=1,2,\ldots,m$ be $m$ position operators, and let $p_j=-i\frac{d}{dx_j}$ the related momentum operator. Then the generators of $\mathfrak{h}(m)$ can be written as follows:
$$
v_{2j-1}=\frac{1}{\sqrt{2}}(x_j+ip_j), \qquad v_{2j}=\frac{1}{\sqrt{2}}(x_j-ip_j),
$$
$j=1,2,\ldots,m$, with $v_{2m+1}=\mathbb{I}$, the identity operator. There are just lowering and raising operators associated to an $m$-dimensional quantum harmonic oscillator, or (alternatively) to $m$ modes of bosons.

A different representation can also be constructed easily in terms of pseudo-bosonic operators, simply by considering a shifted version of the $v_j$, that is,

$$w_{2j-1}=v_{2j-1}+\alpha_j \mathbb{I}, \qquad  w_{2j}=v_{2j}+\beta_j \mathbb{I},$$  
$j=1,2,\ldots,m,$, for any scalar $\alpha_j, \beta_j \in \mathbb{C}$ with $\alpha_j\neq\overline{\beta_j}.$ 

On the other hand,  we may consider the genuine abelian case $\mathfrak{l}$ of dimension $k$ putting $m=0$. Therefore
\begin{equation}\label{heisenberg+abelian}
\mathfrak{l}=\langle v_1, v_2, \ldots, v_{2m}, v_{2m+1}, \ldots, v_{2m+k+1} \ | \ [v_{2j-1},v_{2j}]=\mathbb{I} \  \ \forall j =1, 2, \ldots, m,
\end{equation}
$$ \mbox{and} \ v_{2m+1}= v_{2m+2}= \ldots = v_{2m+k+1}=\mathbb{I} \rangle. $$

This means that there is no loss of generality in assuming both $m \ge 0$ and $k\ge 0$.

From \cite[Theorem 4.2, Page 44]{nr3}, we can see that
\begin{equation}
t(\mathfrak{l}) = \mathrm{dim} \ \mathfrak{l} = 2m+k+1,
\end{equation}
%and our Lie algebra has presentation (with non-zero relations)
where $m$ and $k$ can grow arbitrarily.
%\begin{equation}\label{heisenberg+abelian}
% \langle a_1, a^\dag_1, a_2, a^\dag_2, \ldots, a_{2m}, a^\dag_{2m},, \mathbb{I}, b_1, \ldots, b_k \ | \ %[a_i,a^\dag_i]=\mathbb{I}, \ \forall i \in \{1,2, \ldots, 2m\}\rangle,
% \end{equation}
%where the missing commutators $[a_i,b_j]=[a^\dag_i,b_j]=[b_l,b_j]=0$ for all $i \in \{1,2, \ldots, 2m\}$ and for all $j,l \in \{1,2, \ldots, k\}$.
%\textcolor{red}{Anche qui si dovrebbe descrivere in modo coerente con la sezione precedente gli operatori di cui sopra, e, spero di non aver commesso errori nella presentazione. Sarebbe bello esporre un modello fisico concreto. Mi sa che con l'oscillatore armonico qui ci si riesce, o forse anche con meno. Ma non riesco a scrivere questa osservazione in modo preciso e coerente con il materiale della sezione precedente.}{\color{blue}ancora una volta, non capisco cosa vuoi qui. Mi spieghi?}
This allows us to conclude that $\mathfrak{l}$ is always realized by pseudo-bosons and its corank may be arbitrary big, so the result follows.

\end{proof}

%The Lie algebra of the proof of Theorem \ref{main1} may be used successfully as below.

Now we provide concrete examples for the Lie algebras with $t(\mathfrak{l}) \le 6$.
%\begin{thm}\label{main2} Given a finite dimensional nilpotent Lie algebra $\mathfrak{l}$, there are pseudo-bosonic %operators for any value of $\mathrm{dim} \ H^2(\mathfrak{l}, \mathbb{C}^*) \ge 0$.
%\end{thm}

%\begin{proof} Using again the Lie algebra in \eqref{heisenberg+abelian}, we get from Definition \ref{corank}
%\[\mathrm{dim} \ H^2(\mathfrak{l}, \mathbb{C}^*) =\frac{1}{2}(2m+k+1) (2m+k) - (2m+k+1)\]
%\[ =\frac{1}{2} (2m+k+1) (2m+k-1)=\frac{1}{2}({(2m+k)}^2-1)=\frac{1}{2}(4m^2+4mk+k^2-1)\]
%\[=2m^2+2mk+\frac{k^2-1}{2}=2m(m+k)+\frac{k^2-1}{2}.\]
%Fixing $k=1$, the above formula becomes
%\[\mathrm{dim} \ H^2(\mathfrak{l}, \mathbb{C}^*)=2m(m+1),\]
%where $m \ge 1$. The smallest value that we get in this case if $\mathrm{dim} \ H^2(\mathfrak{l}, \mathbb{C}^*)=4$
%\end{proof}

\begin{thm}\label{main2}All the nilpotent Lie algebras of Theorem \ref{tl} are realized by pseudo-bosonic operators.
\end{thm}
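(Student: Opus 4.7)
The plan is to split Theorem \ref{tl} into two classes. In the first class are all Lie algebras of the form $\mathfrak{h}(m)\oplus\mathfrak{i}^{k}$; inspecting the list, these account for every algebra appearing in items (i)--(iv), for $\mathfrak{l}_{5,2}\oplus\mathfrak{i}$ in (v), for both algebras in (vi), and for $\mathfrak{l}_{4,2}\oplus\mathfrak{i}$, $\mathfrak{h}(2)\oplus\mathfrak{i}$ and $\mathfrak{l}_{8,2}$ in (vii). For all of these the realisation is already given by Theorem \ref{main1}. The second class consists of the genuinely non-Heisenberg algebras $\mathfrak{l}_{4,3}$ and $\mathfrak{l}_{5,8}$ from (v), together with $\mathfrak{l}_{5,5}$ and $\mathfrak{l}_{5,8}\oplus\mathfrak{i}$ from (vii); these are the only new cases and I would treat them by writing down explicit pseudo-bosonic operators, invoking Lemma \ref{technical} to guide the construction through the central-extension structure.

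For the concrete realisations, I would start from a pseudo-bosonic pair $(a,b)$ with $[a,b]=\mathbb{I}$ and set $x=(a+b)/\sqrt{2}$, $p=(a-b)/(i\sqrt{2})$, so that $[x,p]=i\mathbb{I}$ in full analogy with the canonical bosonic case. For the class-three algebra $\mathfrak{l}_{4,3}$ a single pair suffices: take
\[
v_1=p,\quad v_2=\tfrac{1}{2}x^{2},\quad v_3=[v_1,v_2]=-ix,\quad v_4=[v_1,v_3]=-\mathbb{I},
\]
and verify that the remaining three brackets vanish. For $\mathfrak{l}_{5,5}$, two independent pseudo-bosonic pairs $(a_1,b_1),(a_2,b_2)$ produce $(x_1,p_1,x_2,p_2)$ and one sets
\[
v_1=p_1,\quad v_2=-\tfrac{1}{2}x_1^{2}+p_2,\quad v_3=ix_1,\quad v_4=ix_2,\quad v_5=\mathbb{I},
\]
which reproduces the three relations $[v_1,v_2]=v_3$, $[v_1,v_3]=v_5$, $[v_2,v_4]=v_5$ and kills every remaining bracket. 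For $\mathfrak{l}_{5,8}$ a single Fock-like Hilbert space is inadequate, since its centre is one-dimensional while $\dim Z(\mathfrak{l}_{5,8})=2$; I would therefore work on an orthogonal direct sum $\mathcal{H}_1\oplus\mathcal{H}_2$, place a pseudo-bosonic pair $(a_j,b_j)$ on each summand (extended by zero to the other) and set $v_1=a_1+a_2$, $v_2=b_1$, $v_3=b_2$, $v_4=\mathbb{I}_1$, $v_5=\mathbb{I}_2$, where $\mathbb{I}_j$ denotes the identity of $\mathcal{H}_j$ extended by zero. The direct sum $\mathfrak{l}_{5,8}\oplus\mathfrak{i}$ is then handled by appending a further orthogonal summand $\mathcal{H}_3$ and using $\mathbb{I}_3$ as the extra central generator, exactly in the spirit of the shifted-ladder construction in the proof of Theorem \ref{main1}. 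Finally, to pass from bosons to genuine pseudo-bosons one replaces each $v_j$ by $w_j=v_j+\alpha_j\mathbb{I}$ for arbitrary $\alpha_j\in\mathbb{C}$; constant shifts commute with everything, so every bracket relation above is preserved.

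The main obstacle is the case $\mathfrak{l}_{5,8}$: a naive attempt inside a single pseudo-bosonic system forces both putative central generators to collapse onto the same identity, and one has to exploit the block-diagonal direct-sum architecture to produce two algebraically independent central elements. Beyond that, it remains to check that Assumptions $\mathcal{D}$-pb 1--3 of \cite{bagrus2018} are preserved by the polynomial combinations needed for the class-three algebras $\mathfrak{l}_{4,3}$ and $\mathfrak{l}_{5,5}$ and by the orthogonal direct sum used for $\mathfrak{l}_{5,8}$; this is routine but requires exhibiting a common invariant dense domain and a brief verification of the $\mathcal{G}$-quasi basis property on each summand before they are glued.
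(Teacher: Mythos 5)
Your splitting of Theorem \ref{tl} into the $\mathfrak{h}(m)\oplus\mathfrak{i}^{k}$ cases (handled by Theorem \ref{main1}) plus the four exceptional algebras is exactly the paper's strategy, and your explicit realizations of $\mathfrak{l}_{4,3}$ and $\mathfrak{l}_{5,5}$ are, up to a linear change of generators, the same quadratic constructions the paper uses ($v_2=\tfrac12 b^2$ for $\mathfrak{l}_{4,3}$, and $v_2=a_2+\tfrac12 b_1^2$ for $\mathfrak{l}_{5,5}$). The genuine problem is your treatment of $\mathfrak{l}_{5,8}$. The claimed obstruction --- that a single Fock-like space cannot host it because $\dim Z(\mathfrak{l}_{5,8})=2$ while ``the centre is one-dimensional'' --- is false: the central elements of the abstract Lie algebra only need to commute with the \emph{image} of the five generators, not with the whole operator algebra of the Hilbert space, so they need not be multiples of the identity. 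The paper exploits precisely this, working with two (pseudo-)bosonic modes on one Hilbert space and taking
$$v_1=a_1,\quad v_2=b_1,\quad v_3=b_1a_2,\quad v_4=\mathbb{I},\quad v_5=a_2,$$
where $v_5=a_2$ commutes with $v_1,v_2,v_3,v_4$ (the mixed generator $v_3=b_1a_2$ is the key trick) but is in no way proportional to the identity. Your direct-sum workaround is therefore unnecessary, and moreover it takes you outside the pseudo-bosonic framework you are supposed to stay in: after extending $(a_j,b_j)$ by zero to $\mathcal{H}_1\oplus\mathcal{H}_2$ one has $[a_j,b_j]=P_j$ (a projection), not the identity, so Assumption $\mathcal{D}$-pb 1 of \cite{bagrus2018} fails on the full space --- the very check you defer as ``routine'' would not go through for that construction. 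The commutator table you write down does reproduce $\mathfrak{l}_{5,8}$ as an abstract Lie algebra of operators, but the building blocks are no longer pseudo-bosonic pairs, so the statement ``realized by pseudo-bosonic operators'' is not established by your argument for this case (nor, consequently, for $\mathfrak{l}_{5,8}\oplus\mathfrak{i}$, which you handle by adding yet another orthogonal summand; the paper simply adjoins one operator commuting with the five above).
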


\begin{proof}
We begin to consider $t(\mathfrak{l})$ for the values $0,1,2,3,4$. Looking at Theorem \ref{tl} (i), (ii), (iii), (iv) and (v), we identify just two cases which do not fit into the form of the Lie algebra \eqref{heisenberg+abelian}. These are $\mathfrak{l}_{4,3}$ and $\mathfrak{l}_{5,8}$. We are going to prove that both of them may be described in terms of pseudo-bosonic operators. About the notations and the terminology for pseudo-bosonic operators, we will use again that  in \cite[Section 3]{bagrus2018}.

Let $c$ be a bosonic operator such that $[c,c^\dagger]=cc^\dagger-c^\dagger\,c=\mathbb{I}$.  We define
\begin{equation}\label{add1}
v_1=c,\quad v_2=\frac{1}{2}\,(c^\dagger)^2,\quad v_3=c^\dagger,\quad v_4=\mathbb{I}.
\end{equation}
It is easy to check that all the commutators are zero, except that
$$ [v_1,v_2]=v_3, \qquad [v_1,v_3]=v_4.$$
Hence we get $\mathfrak{l}_{4,3}$. These operators can be used in quantum optics to construct two-photons Hamiltonians which are relevant, for instance, in connection with squeezed states, \cite{barnett}. The Hamiltonian is $H_0=\omega c^\dagger c+i(g{c^\dagger}^2-\overline{g}c^2)$ which, in terms of $v_j$ above can be written as
$$
H=\omega v_3v_1+i\left(2gv_2-\overline{g}v_1^2\right).
$$
Here $\omega$ is a real quantity related to the free energy of the bosons, while the complex $g$ {\em measures} the degree of squeezing of the system, \cite{barnett}.

Another example of the same algebra can be constructed using pseudo-bosonic operators $a$ and $b$ satisfying $[a,b]=\mathbb{I}$. In this case we put
$$
v_1=a,\quad v_2=\frac{1}{2}\,b^2,\quad v_3=b,\quad v_4=\mathbb{I}.
$$ They also describe $\mathfrak{l}_{4,3}$.

Of course, the same Hamiltonian $H$ as above could be used simply replacing $v_j$ with their pseudo-bosonic counterparts. This could be relevant in the analysis of bi-squeezed states, which extend squeezed states to a non hermitian context, as bi-coherent states do for coherent states.

Now we pass to describe $\mathfrak{l}_{5,8}$.
In this case one single bosonic or pseudo-bosonic operator is not enough. For this reason, let us consider two bosonic operators $c_1$ and $c_2$ such that $$
[c_k,c_l^\dagger]=\delta_{k,l}\mathbb{I}, \qquad [c_k,c_l]=0,
$$
where $\delta_{k,l}$ denotes the well known symbol of Kronecker. Then, defining
$$
v_1=c_1,\quad v_2=c_1^\dagger,\quad v_3=c_1^\dagger\,c_2,\quad v_4=\mathbb{I}, \quad v_5=c_2,
$$
the only non zero commutators are the following:
$$
[v_1,v_2]=v_4, \qquad [v_1,v_3]=v_5.$$
Hence we recover $\mathfrak{l}_{5,8}$. The same result can be found if we use two pseudo-bosonic operators $[a_1,b_1]=[a_2,b_2]=\mathbb{I}$, all the other commutators being zero. In this case we have to define
$$ v_1=a_1,\quad v_2=b_1,\quad v_3=b_1\,a_2,\quad v_4=\mathbb{I}, \quad v_5=a_2.
$$
As for a physical appareance of these (bosonic or pseudo-bosonic) operators we could consider the operator
 $$
 H=\lambda(v_3+v_2v_5),
 $$
for real $\lambda$. $H$ is manifestly not self-adjoint either if we use bosonic or pseudo-bosonic operators. The meaning of $H$ is simple: it describes interactions between the two modes of particles. Notice that no kinetic term is present in $H$. This is not strange, in concrete models, since quite often the most relevant part of an Hamiltonian is that term which describes the interactions. 
%An application to concrete systems of similar Hamiltonians can be found in \cite{baggargtum}. %{\textcolor{red}{Bello ! Ci metti qui una reference classica in cui si possono trovare modelli con queste propriet\'a ?}}. {\color{blue} \'e il massimo delle referenze che riesco a dare}

Then we pass to consider the case of $t(\mathfrak{l})=5$, looking at Theorem \ref{tl} (vi). Again here we may realize pseudo-bosonic operators via the Lie algebra \eqref{heisenberg+abelian}.

Passing to the case $t(\mathfrak{l})=6$,  Theorem \ref{tl} (vii) shows that $\mathfrak{l}_{5,5}$ and $\mathfrak{l}_{5,8} \oplus \mathfrak{i}$ are the cases not involved in \eqref{heisenberg+abelian}. While $\mathfrak{l}_{5,8} \oplus \mathfrak{i}$ may be written easily in terms of pseudo-bosonic operators adding an operator which commutes with all those of $\mathfrak{l}_{5,8}$, an explicit construction for $\mathfrak{l}_{5,5}$  must be given.

For that, we can invoke Lemma \ref{technical} and see that it is possible to construct this Lie algebra with a family of pseudo-bosonic operators realizing  the Lie algebra $\mathfrak{h}(1) \oplus \mathfrak{i}$ and a central ideal $\mathfrak{a}=\mathfrak{i}$. Therefore $\mathfrak{l}_{5,5}$ is the central extension of $\mathfrak{a}$ by $\mathfrak{b}=\mathfrak{l}_{5,5}/\mathfrak{a}$. This is a first possible abstract approach.

A different realization of $\mathfrak{l}_{5,5}$ can be constructed using again two modes of bosonic (or pseudo-bosonic) operators $c_1$ and $c_2$ as for $\mathfrak{l}_{5,8}$. Now we define
$$
v_1=c_1,\quad v_2=c_2+\frac{1}{2}(c_1^\dagger)^2,\quad v_3=c_1^\dagger,\quad v_4=c_2^\dagger,\quad v_5=\mathbb{I}.
$$
The only non zero commutators are $[v_1,v_2]=v_3$, and $[v_1,v_3]=[v_2,v_4]=v_5$. These commutators can also be recovered using pseudo-bosonic operators $(a_1,b_1)$ and $(a_2,b_2)$. In this case we define
$$
v_1=a_1,\quad v_2=a_2+\frac{1}{2}b_1^2,\quad v_3=b_1,\quad v_4=b_2,\quad v_5=\mathbb{I}.
$$
A possible Hamiltonian constructed out of these last operators is the following:
$$
H=\omega v_3v_1+\lambda v_4v_2,
$$
which is manifestly non self-adjoint, and describes a sort of "always increasing"
 interaction between two modes, without any decreasing.
\end{proof}

As we have seen in the proof of the previous theorem, the notion of central extension in Definition \ref{extensions} is appropriate to describe nilpotent Lie algebras in terms of pseudo-bosons. In fact we can describe all those of dimension at most five.

\begin{cor}\label{dimension}
All the nilpotent Lie algebras of Theorem \ref{classification} may be realized by central extensions of families of pseudo-bosonic operators.
\end{cor}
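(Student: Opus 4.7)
The plan is to go through each Lie algebra listed in Theorem \ref{classification} and exhibit it either as a direct pseudo-bosonic realization (which is trivially a central extension with zero kernel) or as a genuine central extension of such a realization by an abelian central ideal, appealing to the preceding results.

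First I would dispose of the cases already handled. The abelian algebras $\mathfrak{l}_{3,1}$, $\mathfrak{l}_{4,1}$, $\mathfrak{l}_{5,1}$ and the Heisenberg-plus-abelian direct sums $\mathfrak{l}_{3,2}\simeq\mathfrak{h}(1)$, $\mathfrak{l}_{4,2}\simeq\mathfrak{h}(1)\oplus\mathfrak{i}$, $\mathfrak{l}_{5,2}\simeq\mathfrak{h}(1)\oplus\mathfrak{i}\oplus\mathfrak{i}$, and $\mathfrak{l}_{5,4}\simeq\mathfrak{h}(2)$ all fit the template \eqref{heisenberg+abelian}, so they admit pseudo-bosonic realizations by the construction inside the proof of Theorem \ref{main1}. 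Since every Lie algebra is the trivial central extension of itself by the zero ideal, these realizations already qualify. The algebras $\mathfrak{l}_{4,3}$, $\mathfrak{l}_{5,5}$, and $\mathfrak{l}_{5,8}$ are realized explicitly by pseudo-bosons in the proof of Theorem \ref{main2}, and the same observation applies.

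Next I would treat $\mathfrak{l}_{5,3}=\mathfrak{l}_{4,3}\oplus\mathfrak{i}$ by taking the realization $v_1=c$, $v_2=\tfrac{1}{2}(c^\dagger)^2$, $v_3=c^\dagger$, $v_4=\mathbb{I}$ of $\mathfrak{l}_{4,3}$ from equation \eqref{add1}, adjoining a second independent bosonic mode $d$ (so that $d$ commutes with $c$ and $c^\dagger$), and setting $v_5=d$. Since $v_5$ commutes with $v_1,\ldots,v_4$, the ideal $\langle v_5\rangle\simeq\mathfrak{i}$ is central and the quotient returns $\mathfrak{l}_{4,3}$, exhibiting $\mathfrak{l}_{5,3}$ as a central extension of $\mathfrak{i}$ by $\mathfrak{l}_{4,3}$.

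The main obstacle is the remaining trio $\mathfrak{l}_{5,6}$, $\mathfrak{l}_{5,7}$, $\mathfrak{l}_{5,9}$. By Lemma \ref{technical} each is a central extension of a one-dimensional central ideal $\mathfrak{a}=\mathfrak{i}$ by the quotient $\mathfrak{b}=\mathfrak{l}_{4,3}$, and the three extensions are pairwise inequivalent. My strategy is to take the pseudo-bosonic realization of $\mathfrak{l}_{4,3}$, adjoin a second pseudo-bosonic mode $(a_2,b_2)$ with $[a_2,b_2]=\mathbb{I}$, and choose the five generators as polynomials in $(a_1,b_1,a_2,b_2)$ so that the bracket relations in the presentations of Theorem \ref{classification} are satisfied; this procedure mirrors the construction carried out for $\mathfrak{l}_{5,5}$ inside the proof of Theorem \ref{main2}. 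The delicate point is fitting the simultaneous identifications such as $[v_1,v_4]=[v_2,v_3]=v_5$ in $\mathfrak{l}_{5,6}$, which rules out any single-mode ansatz of the form $v_2=f(b_1)$ alone (since that would force $[v_2,v_3]=0$) and genuinely requires the second mode. Combining these constructions with the preceding cases exhausts Theorem \ref{classification} and completes the proof.
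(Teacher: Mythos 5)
Your treatment of the easy cases coincides with the paper's: everything of the form \eqref{heisenberg+abelian} is realized directly by the operators of Theorems \ref{main1} and \ref{main2}, $\mathfrak{l}_{4,3}$, $\mathfrak{l}_{5,5}$, $\mathfrak{l}_{5,8}$ are realized explicitly in Theorem \ref{main2}, and $\mathfrak{l}_{5,3}=\mathfrak{l}_{4,3}\oplus\mathfrak{i}$ is handled by adjoining a commuting mode, which is fine. The problem is the trio $\mathfrak{l}_{5,6}$, $\mathfrak{l}_{5,7}$, $\mathfrak{l}_{5,9}$: there you only announce a strategy --- ``choose the five generators as polynomials in $(a_1,b_1,a_2,b_2)$ so that the bracket relations are satisfied'' --- without exhibiting any such generators or verifying any brackets. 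That is not a proof, and your own remark about the simultaneous condition $[v_1,v_4]=[v_2,v_3]=v_5$ in $\mathfrak{l}_{5,6}$ shows the construction is genuinely nontrivial (a naive single-mode ansatz fails, and it is not checked that a two-mode polynomial ansatz closes on a five-dimensional span without producing extra elements). As written, the argument is incomplete exactly for these three algebras.

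The gap is easily closed, and in fact you already have the needed ingredient in hand: the corollary, as the paper proves it, does not ask for an explicit operator realization of every algebra in Theorem \ref{classification}, but only that each algebra be a central extension whose quotient is a family of pseudo-bosonic operators. By Lemma \ref{technical}, each of $\mathfrak{l}_{5,6}$, $\mathfrak{l}_{5,7}$, $\mathfrak{l}_{5,9}$ is a central extension of a one-dimensional central ideal by $\mathfrak{l}_{4,3}$ (and $\mathfrak{l}_{5,5}$, $\mathfrak{l}_{5,8}$ by $\mathfrak{h}(1)\oplus\mathfrak{i}$), and the quotients $\mathfrak{l}_{4,3}$ and the algebras of type \eqref{heisenberg+abelian} are realized by (pseudo-)bosonic operators in Theorems \ref{main1} and \ref{main2}; this is precisely the paper's one-line proof. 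So either delete your unfinished explicit construction and conclude abstractly from Lemma \ref{technical} as you do for the statement of the extension structure, or, if you want the stronger statement (concrete operators spanning $\mathfrak{l}_{5,6}$, $\mathfrak{l}_{5,7}$, $\mathfrak{l}_{5,9}$ themselves), you must actually write down the operators and check closure of the brackets, which goes beyond what the corollary claims.
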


\begin{proof}
Using Lemma \ref{technical} and the fact that \eqref{heisenberg+abelian} is described by pseudo-bosons, the result follows.
\end{proof}

 The conclusions of the above corollary may be rephrased in a
 more significant way, showing the the theory of the
 pseudo-boson operators is really a powerful tool both in
 mathematics and physics.

 \begin{cor}\label{final}
 If there exists a finite dimensional nilpotent Lie algebras
 which cannot be realized by central extensions of families
 of pseudo-bosonic operators, then its dimension must be
 $\ge 6$.
 \end{cor}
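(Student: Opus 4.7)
The approach is by contraposition: to establish the corollary it suffices to show that every finite dimensional nilpotent Lie algebra $\mathfrak{l}$ with $\mathrm{dim} \ \mathfrak{l} \le 5$ admits a realization as a central extension of a family of pseudo-bosonic operators. I would organize the argument by the value of $\mathrm{dim} \ \mathfrak{l}$.

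For $\mathrm{dim} \ \mathfrak{l} \in \{3,4,5\}$ the claim is already contained in Corollary \ref{dimension}, which runs through each isomorphism class in Theorem \ref{classification} and supplies the required pseudo-bosonic realization by combining the central-extension structure identified in Lemma \ref{technical} with the constructions in \eqref{heisenberg+abelian} and those exhibited in the proof of Theorem \ref{main2}. So no new work is needed in this range.

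The only cases not listed in Theorem \ref{classification} are $\mathrm{dim} \ \mathfrak{l} \le 2$. A nilpotent Lie algebra of dimension at most two is necessarily abelian: for $\mathrm{dim} \ \mathfrak{l} \le 1$ this is trivial, and for $\mathrm{dim} \ \mathfrak{l}=2$ it follows from the fact that the unique non-abelian two-dimensional Lie algebra has trivial center and is therefore not nilpotent. Both of these situations fit into the purely abelian instance $m=0$, $k\in\{0,1,2\}$, of the family \eqref{heisenberg+abelian}, which by Theorem \ref{main1} is realized by (possibly shifted) pseudo-bosonic operators, and which is trivially a central extension with abelian kernel.

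The hard part, if one wants to call it that, is merely bookkeeping: verifying that the very low dimensions absent from Theorem \ref{classification} are still accounted for. Once this observation is made, the contrapositive of the statement just proved yields the corollary at once.
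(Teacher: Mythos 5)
Your argument is correct and follows essentially the same route as the paper, which treats this corollary as the immediate contrapositive rephrasing of Corollary \ref{dimension}. Your explicit check that the dimensions $\le 2$ omitted from Theorem \ref{classification} are abelian and hence covered by the purely abelian instance of \eqref{heisenberg+abelian} is a small but welcome piece of bookkeeping that the paper leaves implicit.
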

 
\section*{Acknowledgements}
F.B. acknowledges partial support from Palermo University and from the Gruppo Nazionale di Fisica Matematica (GNFM) of the Istituto Nazionale di Alta Matematica (INdAM).


\begin{thebibliography}{99}

%\bibitem{anch}J.M. Ancoch\'ea-Berm\'udez and M. Goze, Classification des alg\'ebres de Lie nilpotentes complexes de dimension 7, \textit{Arch. Math. (Basel)} \textbf{52} (1989),  175--185.


%\bibitem{aitbook}  J.-P. Antoine, A. Inoue and  C. Trapani, {\it Partial $*-$algebras and TheirOperator Realizations}, Kluwer, Dordrecht, 2002.


\bibitem{baginbagbook} F. Bagarello, {\em Deformed canonical (anti-)commutation relations and non hermitian hamiltonians}, in {Non-selfadjoint operators in quantum physics: Mathematical aspects}, F. Bagarello, J. P. Gazeau, F. H. Szafraniec and M. Znojil Eds., John Wiley and Sons Eds., (2015).

\bibitem{a2} F. Bagarello {\em Construction of pseudo-bosons systems},  J. Math. Phys., {\bf 51}, 053508 (2010) (10pg)

%\bibitem{a3} F. Bagarello, {\em From self-adjoint to non self-adjoint harmonic oscillators: physical consequences	and mathematical pitfalls}, Phys. Rev. A, {\bf 88}, 032120 (2013)

\bibitem{bagBS} F. Bagarello, Appearances of pseudo-bosons from Black-Scholes equation,  \textit{J. Math. Phys.} {\bf 57}, 043504 (2016).


%\bibitem{bagPR2015} F. Bagarello, M. Lattuca, R. Passante, L. Rizzuto and S. Spagnolo,  A Non-Hermitian Hamiltonian for a Modulated Jaynes-Cummings Model with ${\mathcal PT}$ Symmetry, {\em Phys. Rev. A} {\bf 91} (2015),  042134.


%\bibitem{bagrev2007} F. Bagarello,  Algebras of unbounded operators and physical applications: a survey,  {\em Reviews in Math. Phys}  {\bf 19} (2007),  231--272.





%\bibitem{bagrevA} F. Bagarello,  From self-adjoint to non self-adjoint harmonic oscillators: physical consequences
%and mathematical pitfalls, \textit{Phys. Rev. A} {\bf 88} (2013),  032120.





%\bibitem{bagswans} F. Bagarello,  Examples of Pseudo-bosons in quantum mechanics, \textit{ Phys. Lett. A}  {\bf 374}  (2010), 3823--3827.
%
%
%
%\bibitem{baglat} F. Bagarello and M. Lattuca, $\mathcal{D}$ pseudo-bosons in quantum models,   \textit{Phys. Lett. A} {\bf 377} (2013), 3199--3204.


\bibitem{bagnew1} F. Bagarello, S. T. Ali, J. P. Gazeau, {\em $\mathcal{D}$-pseudo-bosons, Complex Hermite Polynomials, and Integral Quantization},SIGMA, {\bf 11} (2015), 078,  23 pages


\bibitem{bagnew2} F. Bagarello, A. Fring, {\em Generalized Bogoliubov transformations versus ${\mathcal{D}}$-pseudo-bosons},  J. Math. Phys., {\bf 56}, 103508 (2015)


%\bibitem{bagnew3} F. Bagarello, A. Fring, {\em From Pseudo-bosons to Pseudo-Hermiticity via multiple generalized	Bogoliubov transformations},  Int. J. Mod. Phys. B, {\bf 31}, 1750085 (2017) [10 pages]



%\bibitem{bag2} F. Bagarello, Appearances of pseudo-bosons from Black-Scholes equation, \textit{J. Math. Phys.} \textbf{57}(2016), 043504.

%\bibitem{bag5} F. Bagarello, A. Inoue and C Trapani, Non-self-adjoint hamiltonians defined by Riesz bases,
%\textit{J. Math. Phys.} \textbf{55}  (2014), 033501.
%
%\bibitem{bag6} F. Bagarello, Non self-adjoint Hamiltonians with complex eigenvalues, \textit{J. Phys. A}
%\textbf{49}  (2016), 215304.
%
%
%\bibitem{bag7} F. Bagarello, $kq$-representation for pseudo-bosons, and completeness of bi-coherent
%states, \textit{J. Math. Anal. Appl.} {\bf 450} (2017), 631--643.


\bibitem{bagrus2018} F. Bagarello and F.G. Russo, A description of pseudo-bosons in terms of nilpotent Lie algebras, \textit{J. Geom. Phy.} \textbf{125} (2018), 1--11.
    
    
%\bibitem{baggargtum} F. Bagarello, F. Gargano, Non Hermitian operators modelling  with applications to cancer cell division, in preparation    
    
\bibitem{barnett} S. M. Barnett, P. M. Radmore, Methods in theoretical quantum optics, Clarendon Press, Oxford, 1997    


\bibitem{bat} P. Batten,  Multipliers and covers of Lie algebras, \textit{dissertation,} North Carolina State
University, 1993.
\bibitem{es}  P. Batten, K. Moneyhun and E. Stitzinger, On characterizing nilpotent Lie algebras by their multipliers, \textit{Comm. Algebra} \textbf{24} (1996), 4319--4330.
\bibitem{es2} P. Batten and E. Stitzinger,  On covers of Lie algebras. \textit{Comm. Algebra} \textbf{24} (1996), 4301--4317.

%\bibitem{ben}  C. Bender, {\em Making Sense of Non-Hermitian Hamiltonians}, Rep. Progr.  Phys., {\bf 70},  947-1018 (2007)



\bibitem{berkovich}   Y. Berkovich, On the order of the commutator subgroups and the Schur multiplier of
a finite $p$-group, \textit{J. Algebra } \textbf{144} (1991), 269--272.


%\bibitem{bag8} F. Bagarello, Intertwining operators for non self-adjoint Hamiltonians and bicoherentstates, \textit{J. Math. Phys.} \textbf{57} (2016), 103501.


%\bibitem{dapro}N. Bebiano, J. da Provid$\mathrm{\hat{e}}$ncia and  J.P. da Provid$\mathrm{\hat{e}}$ncia,
%Classes of non-hermitian operators with real eigenvalues, {\em  Electr. J. Linear Algebra} {\bf 21}  (2010), 98--109.



%\bibitem{beckekolman}R.E. Beck and B. Kolman, \textit{Construction of nilpotent Lie algebras over arbitrary fields}, in: Paul S. Wang (Ed.), Proceedings of the 1981 ACM Symposium on Symbolic and Algebraic Computation, ACM, New York, 1981, pp. 169--174.



\bibitem{beltita}D. Beltit\u{a} and M. \c{S}abac, \textit{Lie Aigebras of Bounded Operators},  Birk\"auser, Berlin, 2001.

%\bibitem{benjon} C.M.Bender and H.F. Jones,  Interactions of Hermitian and non-Hermitian Hamiltonians, \textit{J. Phys. A} {\bf 41}  (2008), 244006.

\bibitem{gold} H. Goldstein, \textit{Classical Mechanics}, (2nd ed.). Addison-Wesley, 1980


%\bibitem{chri}O. Christensen, {\em An Introduction to Frames and Riesz Bases}, Birkh\"auser, Boston, 2003.


%\bibitem{gong} M.-P. Gong, \textit{Classification of nilpotent Lie algebras of dimension 7}, PhD thesis, University of Waterloo, Waterloo, Canada, 1998.

%\bibitem{goze}M. Goze and Yu. Khakimyanov, \textit{Nilpotent Lie Algebras}, Kluwer, Dordrecht, 1996.


\bibitem{degraaf}W. de Graaf, Classification of 6-dimensional nilpotent Lie algebras over field of characteristic not 2, \textit{J. Algebra}  \textbf{309} (2007), 640--653.

\bibitem{har} P. Hardy, On characterizing nilpotent Lie algebras by their multipliers III, \textit{Comm. Algebra} \textbf{33} (2005),  4205--4210.

\bibitem{har1} P. Hardy and E. Stitzinger, On characterizing nilpotent Lie algebras by their multipliers $t(L) = 3; 4; 5; 6$, \textit{Comm. Algebra} \textbf{26} (1998), 3527--3539.


%\bibitem{hofmor}K.H. Hofmann and S. Morris, \textit{The Structure of Compact Groups}, de Gruyter, Berlin, 2013.

%\bibitem{roy2} T. K. Jana and P. Roy,  Pseudo Hermitian formulation of the quantum Black-Scholes Hamiltonian, \textit{Phys. A} {\bf 391} (2012), 2636--2640.


%\bibitem{knapp}A.W. Knapp, \textit{Lie groups, Lie algebras and cohomology}, Princeton University Press, Princeton, 1988.



%\bibitem{messiah} A. Messiah, {\em Quantum Mechanics}, vol. 1, North Holland Publishing Company, Amsterdam, 1967.

\bibitem{morozov}V.V. Morozov, Classification of nilpotent Lie algebras of sixth order, \textit{Izv. Vyss. Ucebn. Zaved. Mat.} \textbf{4} (1958), 161--171.



\bibitem{mon}K. Moneyhun,  Isoclinisms in Lie algebras, \textit{Algebras Groups Geom.} \textbf{11} (1994), 9--22.

%\bibitem{mosta} A. Mostafazadeh, {\em Pseudo-Hermitian representation of Quantum Mechanics}, Int. J. Geom. Methods Mod. Phys. {\bf 7}, 1191-1306 (2010)


\bibitem{nr1} P. Niroomand and F.G. Russo, A restriction on the Schur multiplier of nilpotent Lie algebras,
  {\em  Electr. J. Linear Algebra} {\bf 22} (2011), 1--9.


\bibitem{nr2}
P. Niroomand and F.G. Russo, A note on the Schur multiplier of a nilpotent Lie algebra, {\em Comm.  Algebra},
\textbf{39} (2011), 1293--1297.

\bibitem{nr3} P. Niroomand, M. Parvizi and F.G. Russo, Some criteria for detecting capable Lie algebras, \textit{J. Algebra} \textbf{384} (2013), 36--64.

%\bibitem{nr4}P. Niroomand and F.G. Russo, Some restrictions on the Betti numbers of a nilpotent Lie algebra, \textit{Bull. Belg. Math. Soc. - Simon Stevin} \textbf{21} (2014), 403--413.

\bibitem{schu} K. Schm\"udgen, {\it Unbounded operator algebras and representation theory}, Birkh\"auser, Basel, 1990.


\bibitem{schur} I. Schur, Untersuchungen \"uber die Darstellung der endlichen Gruppen durch gebrochene lineare Substitutionen,  \textit{J. Reine Angew. Math} \textbf{132} (1907), 85--137.


\bibitem{snobl}L. Snobl and P. Winternitz, \textit{Classification and identification of Lie algebras},
CRM Monograph Series, 33. American Mathematical Society, Providence, RI, 2014.


%\bibitem{swans} M.S. Swanson,  Transition elements for a non-Hermitian quadratic Hamiltonian, {\em J. Math. Phys.} {\bf 45} (2004), 585--601.





%\bibitem{gap} The GAP Group, GAP—Groups, algorithms, and programming, version 4.4, http://www.gap-system.org, 2004.


%\bibitem{trrev} C. Trapani, Quasi $*-$algebras of operators and their applications,  \textit{ Reviews Math. Phys.} \textbf{ 7} (1995),  1303--1332.



\bibitem{turk}P. Turkowski, Solvable Lie algebras of dimension six, \textit{ J. Math. Physics} \textbf{31} (1990), 1344--1350.







\bibitem{weibel}C. Weibel, \textit{An introduction to homological algebra}, Cambridge University Press, Cambridge, 1997.






\bibitem{zhou}X. Zhou,  On the order of the Schur multiplier of finite $p$--groups, \textit{Comm. Algebra} \textbf{22} (1994), 1--8.


\end{thebibliography}
\end{document}